\newcommand{\Ff}{{\mathbb F}}
\newcommand{\Cc}{{\mathbb C}}
\newcommand\cc{{\mathcal C}}        %
\newcommand{\F}{ {{\mathbb F}} }
\newcommand\cF{{\mathbf c}}
\def\Tr{\operatorname{Tr}}
\theoremstyle{plain}
\newtheorem{thm}{Theorem}
\newtheorem{prop}[thm]{Proposition}
\newtheorem{cor}{Corollary}
\theoremstyle{definition}
\newtheorem{defn}{Definition}
\newtheorem{example}{Example}
\newtheorem{remark}{Remark}
\begin{document}
\title{Improved Spectral Bound for Quasi-Cyclic Codes}

\author{Gaojun Luo, Martianus Frederic Ezerman, San Ling, and Buket \"{O}zkaya
\thanks{The work of Gaojun Luo, Martianus Frederic Ezerman, and San Ling is supported by Nanyang Technological University Research Grant 04INS000047C230GRT01. The work of San Ling is supported in part by the National Natural Science Foundation of China under Grant 11971175.}
\thanks{Gaojun Luo, Martianus Frederic Ezerman, and San Ling are with the School of Physical and Mathematical Sciences, Nanyang Technological University, Singapore 637371, (e-mails: $\{\rm gaojun.luo, fredezerman, lingsan\}$@ntu.edu.sg).}
\thanks{Buket \"{O}zkaya is with the Institute of Applied Mathematics, Middle East Technical University, 06800 Ankara, Turkey (e-mail: ${\rm ozkayab}$@metu.edu.tr).}
}


\maketitle

\begin{abstract}
Spectral bounds form a powerful tool to estimate the minimum distances of quasi-cyclic codes. They generalize the defining set bounds of cyclic codes to those of quasi-cyclic codes. Based on the eigenvalues of quasi-cyclic codes and the corresponding eigenspaces, we provide an improved spectral bound for quasi-cyclic codes. Numerical results verify that the improved bound outperforms the Jensen bound in almost all cases. Based on the improved bound, we propose a general construction of quasi-cyclic codes with excellent designed minimum distances. For the quasi-cyclic codes produced by this general construction, the improved spectral bound is always sharper than the Jensen bound.
\end{abstract}

\begin{IEEEkeywords}
Quasi-cyclic code, minimum distance, spectral bound.
\end{IEEEkeywords}

\maketitle

\section{Introduction}\label{sec:intro}

Quasi-cyclic codes constitute an important class in coding theory as they possess nice algebraic structures and have been widely applied. Chen, Peterson, and Weldon in \cite{Chen1969} asserted the existence of asymptotically good quasi-cyclic codes. Numerous quasi-cyclic codes, {\it e.g.}, those found and recorded in \cite{Gulliver1991,Grassl}, have the respective best-known minimum Hamming distances among comparative linear codes of fixed length and dimension. In convolutional coding schemes, quasi-cyclic codes were utilized in \cite{Solomon1979,Esmaeili1998,Lally2006} to characterize convolutional codes. Low Density Parity-Check (LDPC) codes with excellent performance parameters, which serve as a crucial instrument in modern communication systems, were constructed from quasi-cyclic codes in \cite{Lan2007,Kang2010}.

In contrast to the case of cyclic codes, estimates on the minimum distances of quasi-cyclic codes have not been explored as extensively. Finding good lower bounds on the minimum distances of quasi-cyclic codes can be traced back to the work of Jensen in \cite{Jensen1985}, where a bound was derived based on the concatenated structure of the codes. Lally in \cite{Lally2003} established another lower bound by using the index and co-index of the codes.

For cyclic codes, the \emph{zero sets} of the codes provide valuable information on their minimum distances. Well-known bounds, {\it e.g.}, the Bose-Chauduri-Hocquenghem (BCH) bound and the Hartmann-Tzeng (HT) bound, follow from the zero sets, leading to such bounds being classified into \emph{defining set bounds} of cyclic codes. Turning to quasi-cyclic codes, Semenov and Trifonov in \cite{Semenov2012} established the spectral theory of quasi-cyclic codes based on the eigenvalues of polynomial matrices. The same work derived the BCH-like bound. In \cite{Roth2019}, Roth and Zeh followed up by constructing quasi-cyclic codes with designed minimum distances based on the BCH-like bound. Zeh and Ling in \cite{Zeh2016} extended the results of \cite{Semenov2012} by coming up with the HT-like bound and putting forward the spectral theory of quasi-cyclic product codes. In a more recent work, Ezerman {\it et al.} in \cite{Ezerman2021} showed that \emph{any defining set bound of cyclic codes is applicable to quasi-cyclic codes}. This general bound, which is also called the \emph{spectral bound}, includes the BCH-like bound in \cite{Semenov2012} and the HT-like bound in \cite{Zeh2016} as special cases. The numerical comparisons supplied in \cite{Ezerman2021} highlighted that the Jensen bound outperforms the Lally bound and the spectral bound in most cases.

\smallskip
\noindent
\textbf{Our Contributions}

In this paper, we propose a new lower bound on the minimum distances of quasi-cyclic codes based on their spectral theory. We summarize the contributions as follow.
\begin{enumerate}[leftmargin=*]
\item Based on the eigenvalues and the corresponding eigenspaces of any quasi-cyclic code $\cc$, we build an improved spectral bound for $\cc$ in Theorem \ref{ISB}. This bound refines the spectral bound of \cite{Ezerman2021}.  We prove that our improved spectral bound is tighter than that of \cite{Ezerman2021}. The improvement is further confirmed by numerical results in Table \ref{table:simulation}. Let the length and index of $\cc$ be, respectively, $\ell n$ and $\ell$. The spectral bounds in \cite{Ezerman2021} cannot exceed $\max\{n,\ell\}$, whereas our improved spectral bound is greater than $\max\{n,\ell\}$ under certain conditions. Example \ref{example1} exhibits this breakthrough. By constructing random binary and ternary quasi-cyclic codes and comparing the relevant bounds, we obtain numerical validation that our improved spectral bound beats the Jensen bound in almost all cases. Combined with the results in \cite{Ezerman2021}, our improved spectral bound outperforms the Jensen bound of \cite{Jensen1985}, the Lally bound from  \cite{Lally2003}, and the spectral bounds proposed in \cite{Ezerman2021}.

\item Based on our improved spectral bound, we provide a general construction of quasi-cyclic codes with designed minimum distances. Quasi-cyclic codes with small Singleton defects can then be constructed. Example \ref{Example1:QC-BCH} showcases that the dimensions of quasi-cyclic codes from this general construction can be greater than those of the BCH codes for a given length and designed minimum distances. We subsequently prove that the improved spectral bound for quasi-cyclic codes produced by this general construction is always tighter than the corresponding Jensen bound.

\item Locally repairable codes have become a hot topic due to data explosion in the past decades. Sufficient conditions for quasi-cyclic codes to be locally repairable can be formulated from their concatenated structure or from suitable generator polynomial matrices. On the other hand, upper bounds on the minimum distances of quasi-cyclic codes can be obtained from the bounds for locally repairable codes. We use the Jensen bound or the improved spectral bound to propose three constructions of quasi-cyclic codes whose distances reach the upper bounds.
\end{enumerate}

\smallskip
\noindent
\textbf{Organization}

After this introduction, Section \ref{S2} reviews basic definitions and results about cyclic codes and quasi-cyclic codes. We propose our improved spectral bound in Section \ref{S3}. Section \ref{S4} compares the improved spectral bound against the Jensen bound and the previous spectral bound in terms of performance. We provide a general construction of quasi-cyclic codes with designed minimum distances in Section \ref{S5}. Sections \ref{S6} presents the connection between quasi-cyclic codes and locally repairable codes. The concluding remarks in Section \ref{S7} wrap the paper up. All computations, including the examples and the simulation results in Section \ref{S4}, were done in {\tt MAGMA} V2.27-8 \cite{Bosma1997}.

\section{Preliminaries}\label{S2}

This section collects some basic results about cyclic codes, quasi-cyclic codes, and minimum distance bounds from the roots of their generator polynomials or the eigenvalues of generator polynomial matrices. We write vectors and polynomial matrices in bold font. Matrices with scalar entries are written in normal font.

Let $q$ be a prime power and let $\Ff_q$ stand for the finite field with $q$ elements. Given a positive integer $m$, we use $[m]$ to denote the set $\{1,2,\cdots,m\}$. An $[n,k,d(\cc)]_q$ linear code $\cc$ is a $k$-dimensional linear subspace of $\Ff_q^n$ with minimum Hamming distance $d(\cc)$. The code $\cc$ is \textit{maximum distance separable} (MDS) if $d(\cc)=n-k+1$. The \textit{Singleton defect} of a code is the difference $n-k+1-d(\cc)$. Taking the alphabet size of $\cc$ into consideration, the Griesmer bound in \cite{Griesmer1960} tells us that
\[
n\geq \mathcal{G}(k,d(\cc))=\sum_{i=0}^{k-1}\left\lceil\frac{d(\cc)}{q^i}\right\rceil.
\]
If $k=0$, then $\cc=\{\mathbf{0}\}$ is \emph{trivial} and, throughout this paper, its minimum distance is $d=\infty$.

\subsection{Cyclic Codes}

Let $n>1$ be, throughout, an integer with $\gcd(n,q)=1$. A linear code $\cc$ is a \emph{cyclic} code if $(c_{n-1},c_0,\cdots,c_{n-2})\in\cc$ for each codeword $(c_0,c_1,\cdots,c_{n-1})\in\cc$. In other words, the code $\cc$ is invariant under a cyclic shift of codewords. A codeword $\mathbf{c}=(c_0,c_1,\cdots,c_{n-1}) \in \cc$ can be expressed as a polynomial $c(x)=c_0+c_1x+\cdots+c_{n-1}x^{n-1}\in\Ff_q[x]$. This expression allows us to identify the cyclic code $\cc$ as an ideal in the residue class ring $R=\Ff_q[x]/\langle x^n-1\rangle$. Each ideal of $R$ is principal. There exists a unique monic polynomial $g(x)\in R$ of degree $n-k$ such that $\cc=\langle g(x)\rangle$. The polynomial $g(x)$ divides $x^n-1$ and is called the \emph{generator polynomial} of $\cc$, whereas the polynomial $h(x)\in R$ with $h(x) \, g(x)=x^n-1$ is called the \emph{check polynomial} of $\cc$.

Let $r$ be the smallest positive integer such that $q^r\equiv 1 \pmod n$ and let $\alpha$ be a primitive $n^{\rm th}$ root of unity in $\Ff_{q^r}$. The set of all roots of $x^n-1$ is $\Delta=\{\alpha^i:0\leq i\leq n-1\}$. Given a cyclic code $\cc=\langle g(x)\rangle$ of length $n$, the \emph{zero set} of $\cc$ is $Z=\{\alpha^i: 0\leq i\leq n-1, g(\alpha^i)=0\}$. The power set $\mathcal{P}(Z)$ of $Z$ is the \emph{defining set} of $\cc$. Several lower bounds on $d(\cc)$ have been established in \cite{Bose1960,hocquenghem1959codes,Hartmann1972,Roos1983,Lint1986,Betti2006} by selecting a specific defining set of $\cc$. By closely studying these defining set bounds, we formulate the following definition.
\begin{defn}\label{def1}
Let $\cc$ be a cyclic code with zero set $Z$ and minimum distance $d(\cc)$. A \emph{defining set bound} of $\cc$ is an element of some set $D(\cc)=\{(L,d_L):L\subseteq Z,d_L\in(\mathbb{N}\cup\{\infty\}) \mbox{, where } d(\cc)\geq d_L\}$.
\end{defn}

A \emph{consecutive set} is defined to be $\{\alpha^{j+mi}: 0\leq i\leq \delta-2 \}$ with $j\geq 0$, $2\leq\delta\leq n+1$, $m>0$, and $\gcd(m,n)=1$. The BCH bound in \cite{Bose1960,hocquenghem1959codes} corresponds to $\{(L,|L|+1):L\subseteq Z \mbox{ is consecutive}\}$. The respective generalizations of the BCH bound in \cite{Hartmann1972,Roos1983,Lint1986,Betti2006} can be formulated according to Definition \ref{def1}.

\subsection{Quasi-Cyclic Codes and Their Spectral Bound}

Let $\ell$ be a positive integer. An $[\ell n,k,d]_q$ linear code $\cc$ is a \emph{quasi-cyclic} code of \textit{index} $\ell$ and \textit{co-index} $n$ if the code is invariant under a cyclic shift of codewords by $\ell$ positions with $\ell$ being the smallest number satisfying this property. If we arrange the entries of a codeword $c\in\cc$ to form an $n\times \ell$ array
\begin{equation}\label{s2eq1}
c=\begin{pmatrix}
c_{0,0}&c_{0,1}&\cdots&c_{0,\ell-1}\\
c_{1,0}&c_{1,1}&\cdots&c_{1,\ell-1}\\
\vdots& \vdots &  \ddots & \vdots\\
c_{n-1,0}&c_{n-1,1}&\cdots&c_{n-1,\ell-1}
\end{pmatrix},
\end{equation}
then the code $\cc$ is invariant under the reordering of the rows through a cyclic shift. Since $R=\Ff_q[x]/\langle x^n-1\rangle$, the codeword $c$ is associated with the polynomial vector
\[
\mathbf{c}(x)=(c_0(x),c_1(x),\cdots,c_{\ell-1}(x))\in R^\ell,
\]
where $c_i(x)=c_{0,i} + c_{1,i} \,x + \ldots + c_{n-1,i} \, x^{n-1} \in R$ for each $0\leq i\leq \ell-1$. This representation allows for the identification of $\cc$ as an $R$-submodule of $R^\ell$ and we can write
\[
\cc=\left\{\mathbf{a}(x) \, \mathbf{G}(x)\in R^\ell: \mathbf{a}(x)\in(\Ff_q[x])^\ell \right\},
\]
where $\mathbf{G}(x)$ is an $\ell\times \ell$ polynomial matrix over $\Ff_q[x]$. The matrix $\mathbf{G}(x)$ is a \emph{generator polynomial matrix} of $\cc$. It was proved in \cite{Lally2001} that $\mathbf{G}(x)$ can be reduced to the simplified form
\begin{equation}\label{PGM1}
\widetilde{\mathbf{G}}(x)=\begin{pmatrix}
g_{0,0}(x)&g_{0,1}(x)&\cdots&g_{0,\ell-1}(x)\\
0&g_{1,1}(x)&\cdots&g_{1,\ell-1}(x)\\
\vdots& \vdots &  \ddots & \vdots\\
0&0&\cdots&g_{\ell-1,\ell-1}(x)
\end{pmatrix}
\end{equation}
that satisfies the following four conditions:
\begin{enumerate}
\item $\widetilde{\mathbf{G}}(x)$ is upper-triangular, which means that $g_{i,j}(x)=0$ for $i>j$.
\item $\deg(g_{i,j}(x))<\deg(g_{j,j}(x))$ for each $i<j$.
\item $g_{j,j}(x)\mid (x^n-1)$ with $g_{j,j}(x)\neq 0$ for each $0\leq j\leq \ell-1$.
\item $g_{i,j}(x)=0$ for each $i\neq j$ if $g_{j,j}(x)=x^n-1$.
\end{enumerate}
The dimension $k$ of $\cc$ is $\ell n-\sum_{j=0}^{\ell-1}\deg(g_{j,j}(x))$. Since the matrix $\widetilde{\mathbf{G}}(x)$ is obtained by performing elementary operations on rows of $\mathbf{G}(x)$, there exists a generator polynomial matrix $\mathbf{G}^{\prime}(x)$ of the form \eqref{PGM1} when the second and fourth conditions are removed. In fact, the matrix $\mathbf{G}^{\prime}(x)$ admits the following stronger condition, which implies $(1)$ and $(3)$.
\begin{enumerate}
\item[(5)] There exists an upper-triangular $\ell\times\ell$ polynomial matrix $\mathbf{H}(x)$ over $\Ff_q[x]$ such that $\mathbf{H}(x)\mathbf{G}^{\prime}(x)=(x^n-1)\mathbf{I}_\ell$ with $\mathbf{I}_\ell$ being the identity matrix of order $\ell$.
\end{enumerate}
As shown in \cite{Lally2001}, the matrix $\mathbf{G}^{\prime}(x)$ can be reduced to the matrix $\widetilde{\mathbf{G}}(x)$ that satisfies Conditions $(1)-(4)$ by elementary operations on the rows. In particular, $\mathbf{G}^{\prime}(x)$ and $\widetilde{\mathbf{G}}(x)$ have the same diagonal entries. When a quasi-cyclic code has such a generator polynomial matrix $\mathbf{G}^{\prime}(x)$, its dimension is still $\ell n-\sum_{j=0}^{\ell-1}\deg(g_{j,j}(x))$. Conversely, if an $\ell\times\ell$ polynomial matrix over $\Ff_q[x]$ meets Condition $(5)$, then it generates a quasi-cyclic code of index $\ell$ and length $\ell n$.

Based on the polynomial matrix $\widetilde{\mathbf{G}}(x)$ in \eqref{PGM1}, Semenov and Trifonov in \cite{Semenov2012} introduced a spectral bound on the minimum distance of quasi-cyclic codes. This result has been further developed in \cite{Zeh2016,Ezerman2021}. The determinant of $\widetilde{\mathbf{G}}(x)$ is $\det(\widetilde{\mathbf{G}}(x))=\prod_{j=0}^{\ell-1} g_{j,j}(x)$. The roots of $\det(\widetilde{\mathbf{G}}(x))$ are called \emph{eigenvalues} of the quasi-cyclic code $\cc$. The \emph{algebraic multiplicity} of an eigenvalue $\alpha$ is the largest integer $s$ such that $(x-\alpha)^s$ divides $\det(\widetilde{\mathbf{G}}(x))$. Let $r$ be the smallest positive integer such that $q^r\equiv 1 \pmod n$. The \emph{eigenspace} $V_\alpha$ of $\alpha$ is defined as
\begin{equation}\label{eq:eigenspace}
V_\alpha=\{\mathbf{u}\in\Ff_{q^r}^\ell:\widetilde{\mathbf{G}}(\alpha)\mathbf{u}^{\top}=\mathbf{0}^{\top}\}
\end{equation}
and its $\Ff_{q^r}$-dimension is the \emph{geometric multiplicity} of $\alpha$. Semenov and Trifonov asserted in \cite{Semenov2012} that the algebraic multiplicity of any eigenvalue of $\cc$ is equal to its geometric multiplicity. Note that $\Delta=\{\alpha^i:0\leq i\leq n-1\}$ is the set of roots of $x^n-1$. Let $E\subseteq \Delta$ be the set of all eigenvalues of $\cc$. If the diagonal elements $g_{j,j}(x)$ of $\widetilde{\mathbf{G}}(x)$ are constant polynomials, then $E$ is the empty set and the code $\cc$ has parameters $[\ell n,\ell n,1]_q$. If $E=\Delta$ and each eigenvalue of $E$ has geometric multiplicity $\ell$, then $\cc=\{\mathbf{0}\}$. We label the elements of $E$ as $\{\alpha^{i_1},\cdots,\alpha^{i_t}\}$. Let $m_j$ be the geometric multiplicity of $\alpha^{i_j}$ and let $\{\mathbf{v}_{j,0},\cdots,\mathbf{v}_{j,m_j-1}\}$ be a basis of its eigenspace $V_{\alpha^{i_j}}$ for each $1\leq j\leq t$. We define an $m_j\times \ell$ matrix $P_{i_j}$ whose rows are formed by $\mathbf{v}_{j,0},\cdots,\mathbf{v}_{j,m_j-1}$ and an $m_j \times \ell n$ matrix
\[
H_j=(\alpha^{i_j\cdot0},\cdots,\alpha^{i_j(n-1)})\otimes P_{i_j}.
\]
We concatenate $H_1,\cdots,H_t$ vertically to obtain the $\left(\sum_{j=1}^tm_j\right)\times \ell n$ matrix
\begin{equation}\label{PC-QC}
H=\begin{pmatrix}
H_1\\
\vdots\\
H_t
\end{pmatrix}
=\begin{pmatrix}
P_{i_1}&  \alpha^{i_1}P_{i_1} & \cdots& \alpha^{i_1(n-1)}P_{i_1}\\
\vdots& \vdots &  \ddots & \vdots\\
P_{i_t}&  \alpha^{i_t}P_{i_t} & \cdots& \alpha^{i_t(n-1)}P_{i_t}
\end{pmatrix}.
\end{equation}
It was shown in \cite{Semenov2012} that $\sum_{j=1}^tm_j=\sum_{j=0}^{\ell-1}\deg(g_{j,j}(x))$ and the matrix $H$ in \eqref{PC-QC} is a parity-check matrix of the code $\cc$. The \textit{eigencode} of a given eigenspace $V_{\alpha^{i_j}}$ is
\begin{equation}\label{AC-QC}
\mathbb{C}(V_{\alpha^{i_j}}) = \left\{\mathbf{v}\in\Ff_q^\ell : \mathbf{v} \cdot \mathbf{u}^{\top} = \sum_{a=0}^{\ell-1} v_a \, u_a=0\ \mbox{for all}\ \mathbf{u}\in V_{\alpha^{i_j}} \right\} \subseteq \Ff_q^\ell.
\end{equation}
In that same work, Semenov and Trifonov also proposed a BCH-like bound for quasi-cyclic codes. If $\cc$ is an $[\ell n,k,d]_q$ quasi-cyclic code with consecutive eigenvalue set $E=\{\alpha^{i_1},\cdots,\alpha^{i_t}\}$, then $d\geq\min\{t+1,d(\mathbb{C})\}$, with $\mathbb{C}$ being the eigencode of $\bigcap_{j=1}^tV_{\alpha^{i_j}}$. The BCH-like bound was generalized to an HT-like bound in \cite{Zeh2016}. Since these two bounds are defined based on the eigenvalues and eigencodes, they are called \emph{spectral bounds} for quasi-cyclic codes. A general case of spectral bounds was investigated by Ezerman \textit{et al.} in a recent work \cite{Ezerman2021}. The following theorem was proved as a main result.

\begin{thm}\label{thm21}{\rm \cite[Theorem 13]{Ezerman2021}}
Let $\cc$ be an $[\ell n,k,d]_q$ quasi-cyclic code of index $\ell$ with nonempty eigenvalue set $E\subseteq \Delta$. Let $\mathfrak{C}$ be the cyclic code of length $n$ over $\Ff_q$ with zero set $E$ and let $(L,d_L)\in D(\mathfrak{C})$ be a defining set bound given by Definition \ref{def1}. If $\mathbb{C}$ is the eigencode of $\bigcap_{\beta\in L}V_{\beta}$, then $d\geq \min\{d_L,d(\mathbb{C})\}$.
\end{thm}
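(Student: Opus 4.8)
The plan is to combine the parity-check description of $\cc$ from~\eqref{PC-QC} with the defining set bound on the auxiliary cyclic code $\mathfrak{C}$, viewing each codeword of $\cc$ as an $n\times\ell$ array as in~\eqref{s2eq1}. Fix a nonzero codeword $c\in\cc$, written as the array $(c_{i,a})$ with rows $\mathbf{c}_0,\dots,\mathbf{c}_{n-1}\in\Ff_q^\ell$, and let $w$ be its Hamming weight, i.e.\ the total number of nonzero scalar entries. Suppose first that $w<d(\mathbb{C})$; I must then show $w\ge d_L$. The idea is to produce from $c$ a nonzero codeword of the cyclic code $\mathfrak{C}$ whose support is contained in the support of the ``row-pattern'' of $c$, after which the defining set bound $(L,d_L)\in D(\mathfrak{C})$ applied to that codeword yields the conclusion.

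First I would set up the link between the two codes. For a fixed vector $\mathbf{u}$ lying in the common eigenspace $W:=\bigcap_{\beta\in L}V_\beta$, consider the scalar sequence $a_i := \mathbf{c}_i\cdot\mathbf{u}^\top\in\Ff_{q^r}$ for $0\le i\le n-1$, and form the polynomial $a(x)=\sum_{i=0}^{n-1}a_i x^i$. The key computation is that, because $H$ in~\eqref{PC-QC} is a parity-check matrix of $\cc$ and each $\mathbf{u}\in V_\beta$ is (a combination of) rows of the block $P_\beta$, the syndrome equations force $a(\beta^{-1})=0$ (equivalently $\sum_i a_i\beta^i=0$) for every $\beta\in L$; more precisely each row of $H_j$ pairs $c$ against $\mathbf{v}_{j,\cdot}$ twisted by powers of $\alpha^{i_j}$, so $H c^\top=\mathbf 0$ says exactly that the vector $(a_0,\dots,a_{n-1})$ is orthogonal to the rows of the cyclic-code check matrix with zero set $L$. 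Hence $a(x)\in\mathfrak{C}_L$, the cyclic code of length $n$ whose zero set is exactly $L$; since $E$ has zero set containing $L$... wait—$\mathfrak C$ has zero set $E\supseteq L$, so $\mathfrak C\subseteq \mathfrak C_L$, and it is $\mathfrak C_L$ (zero set $L$) to which the bound $d_L$ attaches via Definition~\ref{def1}. Thus $d(\mathfrak C_L)\ge d_L$.

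Next I would handle the dichotomy. If for \emph{every} $\mathbf u\in W$ the sequence $(a_i)_i$ is identically zero, then each row $\mathbf c_i$ is orthogonal to all of $W$, i.e.\ $\mathbf c_i\in\mathbb{C}(W)=\mathbb{C}$ for all $i$; since $c\ne\mathbf 0$ some $\mathbf c_i$ is a nonzero codeword of $\mathbb C$, giving $w\ge d(\mathbb C)$ and we are in the second branch of the minimum. Otherwise pick $\mathbf u\in W$ with $a(x)\ne 0$; then $a(x)$ is a nonzero codeword of a cyclic code with zero set $L$, so its Hamming weight is at least $d_L$. But $a_i=\mathbf c_i\cdot\mathbf u^\top$ can be nonzero only when the row $\mathbf c_i$ is nonzero, so the number of nonzero rows of the array $c$ is at least $d_L$; since each nonzero row contributes at least one nonzero scalar entry, $w\ge d_L$. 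Combining the two cases gives $w\ge\min\{d_L,d(\mathbb C)\}$ for every nonzero $c\in\cc$, hence $d\ge\min\{d_L,d(\mathbb C)\}$.

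The main obstacle I anticipate is the bookkeeping in the second paragraph: checking carefully that the Kronecker-product structure of $H$ in~\eqref{PC-QC}, when evaluated against the array form~\eqref{s2eq1} of $c$, really does translate the single matrix identity $Hc^\top=\mathbf 0$ into the statement ``$(a_0,\dots,a_{n-1})$ annihilates the cyclic check matrix with zero set $L$'' uniformly in $\mathbf u\in W$. One has to be careful that $W$ is defined as an \emph{intersection} of eigenspaces over $\beta\in L$ while $H$ is built from \emph{all} eigenvalues in $E$; the resolution is that membership $\mathbf u\in V_\beta$ is precisely what makes the $\beta$-block of $H$ impose the condition $a(\beta)=0$, and the blocks for $\beta\in E\setminus L$ impose extra (harmless) conditions that we simply discard. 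A secondary point to get right is the direction of the evaluation point ($\beta$ versus $\beta^{-1}$, i.e.\ the convention $c_i(x)=\sum_i c_{i,\cdot}x^i$ in the co-index variable), but this is only a matter of replacing $\alpha$ by $\alpha^{-1}$ consistently and does not affect weights. Everything else—passing from the existence of one good $\mathbf u$ to a weight bound, and the trivial case—is routine.
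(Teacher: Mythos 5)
Your argument is correct, and it is essentially the paper's own method: the paper cites this theorem from \cite{Ezerman2021} without reproving it, but the proof given for its generalization (Theorem \ref{ISB}) is exactly your dichotomy specialized to $s=1$ --- your ``some $\mathbf{u}\in W$ gives $a(x)\neq 0$'' branch is Case 1 there (weight $\geq d_{L_1}$ via the cyclic code with zero set $L$ annihilated by the Vandermonde block), and your ``all rows lie in $\mathbb{C}(W)$'' branch is Case 4 (weight $\geq d(\mathbb{C}_1)$). The only cosmetic differences are that the paper phrases the computation with the basis matrix $P$ of $W$ rather than a single vector $\mathbf{u}$, and your worry about $\beta$ versus $\beta^{-1}$ is moot since $H_j$ evaluates at $\alpha^{i_j}$ directly.
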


By Theorem \ref{thm21}, the Roos bound \cite{Roos1983}, shift bound \cite{Lint1986}, and Betti-Sala bound \cite{Betti2006} can be used to estimate the minimum distances of quasi-cyclic codes.

\subsection{The Concatenated Structure of Quasi-Cyclic Codes}

This subsection describes the concatenated structure of a quasi-cyclic code with length $\ell n$ and index $\ell$. The respective proofs of the following assertions are provided in \cite{Guneri2013}.

Let the polynomial $x^n-1$ factor into irreducible polynomials in $\Ff_q[x]$ as $x^n-1=\prod_{i=1}^tf_i(x)$. Using the Chinese Remainder Theorem, we arrive at the ring isomorphism
\begin{equation}\label{s22eq1}
R=\Ff_q[x]/\langle x^n-1\rangle\cong\bigoplus_{i=1}^t\Ff_q[x]/\langle f_i(x)\rangle.
\end{equation}
We recall that $r$ is the smallest positive integer such that $q^r\equiv 1 \pmod n$ and $\alpha$ is a primitive $n^{\rm th}$ root of unity in $\Ff_{q^r}$. Since it is clear that the roots of each $f_i(x)$ are powers of $\alpha$, we let $v_i$ be the smallest nonnegative integer such that $f_i(\alpha^{v_i})=0$. Since the polynomial $f_i(x)$ is irreducible, each term $\Ff_q[x]/\langle f_i(x)\rangle$ in the direct summand in \eqref{s22eq1} is an extension field of $\Ff_q$, which we denote by $\mathbb{E}_i$. We rewrite \eqref{s22eq1} as
\begin{equation}\label{s22eq2}
R \cong\mathbb{E}_1\oplus \cdots \oplus \mathbb{E}_t \mbox{, sending }
a(x) \mapsto (a(\alpha^{v_1}),\cdots,a(\alpha^{v_t})).
\end{equation}
From \eqref{s22eq2}, we infer that $R^{\ell}\cong\mathbb{E}_1^{\ell}\oplus \cdots \oplus \mathbb{E}_t^{\ell}$. This representation allows for the identification of a quasi-cyclic code $\cc\subseteq R^{\ell}$ as an $(\mathbb{E}_1\oplus \cdots \oplus \mathbb{E}_t)$-submodule of $\mathbb{E}_1^{\ell}\oplus \cdots \oplus \mathbb{E}_t^{\ell}$. The code $\cc$ has the decomposition
\begin{equation}\label{Decomposition}
\cc\cong\cc_1\oplus \cdots \oplus \cc_t,
\end{equation}
where $\cc_i$ is a linear code of length $\ell$ over $\mathbb{E}_i$ and the codes $\cc_1,\cdots,\cc_t$ are the \emph{constituents} of $\cc$. Let the submodule $\cc$ of $R^\ell$ be formed by
\begin{equation}\label{Generators}
\{(a_{1,0}(x),\cdots,a_{1,\ell-1}(x)),\cdots,(a_{b,0}(x),\cdots,a_{b,\ell-1}(x))\}.
\end{equation}
Then we get the constituent code
\begin{equation}\label{constituent}
\cc_i={\rm Span}_{\mathbb{E}_i}\{(a_{j,0}(\alpha^{v_i}),\cdots,a_{j,\ell-1}(\alpha^{v_i})):1\leq j \leq b\}.
\end{equation}

Let $\mathcal{D}_i$ be an $[n,k_i,d_i]_q$ cyclic code with check polynomial $f_i(x)$ for each $i=1,\cdots,t$. The code $\mathcal{D}_i$ is minimal and isomorphic to the field $\mathbb{E}_i$. Let $\theta_i$ be the corresponding \textit{generating primitive idempotent} of $\mathcal{D}_i$. We identify $\langle\theta_i\rangle$ as the cyclic code $\mathcal{D}_i$. The isomorphism between $\mathcal{D}_i$ and $\mathbb{E}_i$ is given by

\begin{align}\label{eq:isomorphism}
\varphi_i &: \langle\theta_i\rangle \rightarrow \mathbb{E}_i \mbox{, sending } a(x)\mapsto a(\alpha^{v_i}) \mbox{ and} \notag \\
\psi_i &: \mathbb{E}_i \rightarrow \langle\theta_i\rangle \mbox{, sending } \beta \mapsto \sum_{j=0}^{n-1}a_jx^j,
\end{align}
with $a_j=\frac{1}{n}\Tr_{\mathbb{E}_i/\Ff_q}(\beta\alpha^{-jv_i})$. Based on the map $\psi_i$, the \emph{concatenated code} $\langle\theta_i\rangle\, \square \, \cc_i$ is
\begin{equation}\label{s22eq3}
\langle\theta_i\rangle\, \square \, \cc_i=\left\{\left(\psi_i(c_{i,0}),\cdots,\psi_i(c_{i,\ell-1})\right):(c_{i,0},\cdots,c_{i,\ell-1})\in\cc_i\right\}.
\end{equation}
Jensen in \cite{Jensen1985} showed that a quasi-cyclic code can be represented as the direct sum of concatenated codes. A lower bound on the minimum distance follows from this concatenated structure as well.

\begin{thm}\label{thmJensen}{\rm \cite{Jensen1985}}
Let $\cc$ be a quasi-cyclic codes of length $\ell n$ over $\Ff_q$. If $\cc_{i_1},\cdots,\cc_{i_t}$ are the nonzero constituents of $\cc$, then $\cc=\bigoplus_{z=1}^t \langle\theta_{i_z} \rangle \, \square \, \cc_{i_z}$. If $d(\cc_{i_1})\leq\cdots\leq d(\cc_{i_t})$, with $d(\cc_{i_z})$ being the minimum distance of the code $\cc_{i_z}$ for each $z=1,\cdots,t$, then the minimum distance $d$ of $\cc$ satisfies
\begin{equation}\label{Jensenbd}
d\geq d_J=\min_{1\leq z\leq t}\left\{d(\cc_{i_z}) \, d(\langle\theta_{i_1}\rangle\oplus\cdots\oplus\langle\theta_{i_z}\rangle)\right\}.
\end{equation}
\end{thm}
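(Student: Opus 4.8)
The plan is to prove the two assertions in sequence: the direct-sum decomposition $\cc=\bigoplus_{z=1}^t\langle\theta_{i_z}\rangle\,\square\,\cc_{i_z}$ into concatenated codes, and then the weight estimate \eqref{Jensenbd}.

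\textbf{The decomposition.} Via the Chinese Remainder isomorphism \eqref{s22eq2} and the induced identification $R^\ell\cong\mathbb{E}_1^\ell\oplus\cdots\oplus\mathbb{E}_t^\ell$, a codeword $c\in\cc$ corresponds to the tuple $(\mathbf{c}_1,\dots,\mathbf{c}_t)$ of constituent codewords $\mathbf{c}_i=(c_{i,0},\dots,c_{i,\ell-1})\in\cc_i$, obtained by evaluating the polynomial entries of $c$ at $\alpha^{v_i}$. The key preliminary check is that the maps $\varphi_i$ and $\psi_i$ of \eqref{eq:isomorphism} are mutually inverse; here one uses the primitive-idempotent relations $\theta_i\theta_j=\delta_{ij}\theta_i$, $\sum_i\theta_i=1$, and the inversion formula $a_j=\frac1n\Tr_{\mathbb{E}_i/\Ff_q}(\beta\alpha^{-jv_i})$ (a short trace computation collapsing a geometric sum over $n$-th roots of unity). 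Applying $\psi_i$ coordinatewise embeds the $i$-th constituent into $\langle\theta_i\rangle^\ell$, and because the ideals $\langle\theta_i\rangle$ are pairwise orthogonal and sum to $R$, each column polynomial of $c$ splits uniquely as $c_a(x)=\sum_{i}\psi_i(c_{i,a})$. Dropping the indices with $\cc_i=\{\mathbf{0}\}$ and relabelling the nonzero constituents yields the claimed internal direct sum $\cc=\bigoplus_{z=1}^t\langle\theta_{i_z}\rangle\,\square\,\cc_{i_z}$.

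\textbf{The weight bound.} Let $c\in\cc$ be nonzero and let $z^{\ast}$ be the largest index $z$ with $\mathbf{c}_{i_z}\neq\mathbf{0}$. Two facts combine. (i) Since $\mathbf{c}_{i_z}=\mathbf{0}$ for every $z>z^{\ast}$, each column satisfies $c_a(x)=\sum_{z=1}^{z^{\ast}}\psi_{i_z}(c_{i_z,a})\in\langle\theta_{i_1}\rangle\oplus\cdots\oplus\langle\theta_{i_{z^{\ast}}}\rangle$, so any nonzero column has Hamming weight at least $d(\langle\theta_{i_1}\rangle\oplus\cdots\oplus\langle\theta_{i_{z^{\ast}}}\rangle)$. (ii) The vector $\mathbf{c}_{i_{z^{\ast}}}$ is a nonzero codeword of $\cc_{i_{z^{\ast}}}$, hence has at least $d(\cc_{i_{z^{\ast}}})$ nonzero coordinates; for each coordinate $a$ with $c_{i_{z^{\ast}},a}\neq0$ the $\langle\theta_{i_{z^{\ast}}}\rangle$-component $\psi_{i_{z^{\ast}}}(c_{i_{z^{\ast}},a})$ of $c_a(x)$ is nonzero by injectivity of $\psi_{i_{z^{\ast}}}$, forcing $c_a(x)\neq0$. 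Thus at least $d(\cc_{i_{z^{\ast}}})$ columns of $c$ are nonzero.

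\textbf{Conclusion.} Stacking (i) and (ii),
\[
\mathrm{wt}(c)=\sum_{a=0}^{\ell-1}\mathrm{wt}(c_a(x))\;\ge\;d(\cc_{i_{z^{\ast}}})\cdot d(\langle\theta_{i_1}\rangle\oplus\cdots\oplus\langle\theta_{i_{z^{\ast}}}\rangle)\;\ge\;\min_{1\le z\le t}\bigl\{d(\cc_{i_z})\cdot d(\langle\theta_{i_1}\rangle\oplus\cdots\oplus\langle\theta_{i_z}\rangle)\bigr\}=d_J ,
\]
and since $c$ was arbitrary, $d\ge d_J$. The hypothesis $d(\cc_{i_1})\le\cdots\le d(\cc_{i_t})$ is not needed for this inequality; it only makes the right-hand side as large as possible, since across $z$ the first factor increases while the second is nonincreasing. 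The main obstacle is really the bookkeeping in the decomposition step --- keeping the CRT splitting of $R$, the constituent evaluations $a(x)\mapsto a(\alpha^{v_i})$, and the concatenation maps $\psi_i$ mutually consistent, in particular $\varphi_i\circ\psi_i=\mathrm{id}$ --- after which the distance estimate is a short counting argument: bound the number of nonzero columns by the top constituent's distance and each nonzero column's weight by the distance of the truncated sum of minimal cyclic codes.
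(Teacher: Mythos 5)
Your proof is correct, and it is essentially the standard argument for Jensen's bound (the paper itself offers no proof, only the citation to Jensen): decompose via the CRT/idempotent splitting, then bound the weight of a nonzero codeword by counting at least $d(\cc_{i_{z^{\ast}}})$ nonzero columns, each lying in the cyclic code $\langle\theta_{i_1}\rangle\oplus\cdots\oplus\langle\theta_{i_{z^{\ast}}}\rangle$ and hence of weight at least its minimum distance. The only point worth making explicit is the reverse inclusion $\bigoplus_z\langle\theta_{i_z}\rangle\,\square\,\cc_{i_z}\subseteq\cc$, which follows because $\cc$ is an $R$-submodule and is therefore stable under multiplication by each idempotent $\theta_{i_z}$; your write-up only spells out the forward inclusion.
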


\section{Improved Spectral Bound for Quasi-Cyclic Codes}\label{S3}

In this section, we provide an improved spectral bound on the minimum distance of quasi-cyclic codes. This new bound outperforms the general spectral bound in Theorem \ref{thm21}.

Theorem \ref{thm21} shows that each defining set bound for cyclic codes is also applicable to quasi-cyclic codes. The minimum distance of quasi-cyclic codes with eigenvalue set $E$ is greater than or equal to $\min\{d_L,d(\mathbb{C})\}$, where $d_L$ and $d(\mathbb{C})$ are determined by the respective eigenvalue set and corresponding eigencode. To implement this spectral bound, it is practical to put the eigencode of $\bigcap_{\beta\in L}V_{\beta}$ to be $\{\mathbf{0}\}$, which implies that $d(\mathbb{C})=\infty$, as it is not easy to design an eigencode with a fixed minimum distance. Roth and Zeh in \cite{Roth2019} constructed an $\ell\times \ell$ upper-triangular polynomial matrix $\mathbf{G}^{\prime}(x)$ over $\Ff_q[x]$ with consecutive eigenvalue set $E=\{\alpha^{b},\alpha^{b+1},\cdots,\alpha^{b+\delta-2}\}$ and the eigencode $\mathbb{C}$ of $\bigcap_{\beta\in E}V_{\beta}$ being $\{\mathbf{0}\}$. Their approach requires the degree $[\Ff_q(\alpha):\Ff_q]$ of $\Ff_q(\alpha)$ over $\Ff_q$ to be $\geq \ell$ and $\deg(\det(\widetilde{\mathbf{G}}(x)))\geq \ell \, (\delta-1)$ when $[\Ff_q(\alpha):\Ff_q] = \ell$. The designed minimum distance of their quasi-cyclic codes is upper bounded by the co-index. To take advantage of the structure of quasi-cyclic codes in ensuring that the spectral bound is completely determined by the eigenvalue sets, we propose the following improved spectral bound for the said code family.

\begin{thm}\label{ISB}
Let $\cc$ be an $[\ell n,k,d]_q$ quasi-cyclic code of index $\ell$ with nonempty eigenvalue set $E$. Let $\mathfrak{C}$ be the cyclic code of length $n$ over $\Ff_q$ with zero set $E$ and let
\[
(L_1,d_{L_1}),\cdots,(L_s,d_{L_s})\in D(\mathfrak{C})
\]
be the defining set bounds given in Definition \ref{def1}. The sets $L_1,\cdots,L_s$ are not necessarily disjoint. Without loss of generality, we assume that $d_{L_1}\geq \cdots \geq d_{L_s}$. If $\mathbb{C}_i$ is the eigencode of $\bigcap_{\beta\in L_i}V_{\beta}$ for each $1\leq i\leq s$, then
\begin{equation}\label{NewSpec}
d\geq \min \left\{d_{L_1},d_{L_2} \, d(\mathbb{C}_1), d_{L_3} \, d\left(\bigcap_{i=1}^2\mathbb{C}_i\right), \cdots, d_{L_s} \, d\left(\bigcap_{i=1}^{s-1}\mathbb{C}_i\right), d\left(\bigcap_{i=1}^s\mathbb{C}_i\right)\right\}.
\end{equation}
\end{thm}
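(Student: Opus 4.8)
The plan is to prove the bound by a codeword-counting argument analogous to the proof of Theorem~\ref{thm21} in~\cite{Ezerman2021}, but stratified according to which of the defining-set bounds a given codeword "sees." Fix a nonzero codeword $\mathbf{c}\in\cc$ and recall that, via the concatenated structure, $\mathbf{c}$ corresponds to an $n\times\ell$ array as in~\eqref{s2eq1}, whose nonzero rows we want to count from below. The key observation, already implicit in~\cite{Ezerman2021}, is that the eigenvalue/eigencode data lets us view each column-combination of $\mathbf{c}$ as lying in a cyclic code: for a subset $L\subseteq E$ and any $\mathbf{w}$ in the eigencode $\mathbb{C}(\bigcap_{\beta\in L}V_\beta)$, the scalar combination $\sum_{a=0}^{\ell-1} w_a\, c_a(x)$ of the polynomial coordinates of $\mathbf{c}$ is a codeword of the cyclic code with zero set containing $L$, hence either zero or of weight $\ge d_L$ by the defining-set bound $(L,d_L)$.

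First I would set up the stratification. Let $\mathbf{c}\ne\mathbf{0}$ and consider the largest index $j\in\{1,\dots,s\}$ such that every $\mathbf{w}\in\bigcap_{i=1}^{j-1}\mathbb{C}_i$ annihilates $\mathbf{c}$ "columnwise" in the sense that $\sum_a w_a c_a(x)=0$ in $R$ for all such $\mathbf{w}$ --- with the convention that $\bigcap_{i=1}^{0}\mathbb{C}_i=\Ff_q^\ell$ so that $j=1$ is the case where $\mathbf{c}$ is not killed by all of $\Ff_q^\ell$, which is automatic for $\mathbf{c}\ne\mathbf 0$; and $j=s$ (with the last term) is the case where $\mathbf{c}$ is killed by all of $\bigcap_{i=1}^s\mathbb{C}_i$. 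There are two regimes. If $j=1$, i.e.\ already some $\mathbf w\in\Ff_q^\ell$ does not columnwise-annihilate $\mathbf c$: then I claim $\mathbf c$ itself, read as a vector in $R^\ell\cong (\Ff_q^n)^\ell$, has at least $d_{L_1}$ nonzero rows. This is because the eigenvalue set of $\cc$ is $E\supseteq L_1$, so each coordinate $c_a(x)$ lies in the cyclic code $\mathfrak C$ with zero set $E$; hence any nonzero coordinate has weight $\ge d_{L_1}$, and the number of nonzero rows of the array is at least the max over $a$ of $\mathrm{wt}(c_a)\ge d_{L_1}$. In the intermediate regime $2\le j\le s$: by maximality, every $\mathbf w\in\bigcap_{i=1}^{j-1}\mathbb{C}_i$ kills $\mathbf c$ columnwise, but some $\mathbf w'\in\bigcap_{i=1}^{j-1}\mathbb{C}_i$ does not lie in... wait, rather, the failure at stage $j$ means some nonzero row-pattern is detected by $\mathbb C_{j}$-type tests; concretely, the combination $\sum_a w'_a c_a(x)$ for suitable $\mathbf w'\in\bigcap_{i=1}^{j-1}\mathbb C_i$ is nonzero, and it lies in the cyclic code with zero set $\supseteq L_j$ (since $L_j\subseteq E$), so it has weight $\ge d_{L_j}$, forcing at least $d_{L_j}$ nonzero rows in $\mathbf c$. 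Simultaneously, the fact that all of $\bigcap_{i=1}^{j-1}\mathbb C_i$ does not kill $\mathbf c$ means the $\ell$-tuples formed by the nonzero rows of $\mathbf c$ are not all orthogonal to $\bigcap_{i=1}^{j-1}\mathbb C_i$, so each nonzero row, viewed appropriately, must... and here one gets the factor $d\!\left(\bigcap_{i=1}^{j-1}\mathbb C_i\right)$: each nonzero \emph{row} contributes at least that many nonzero \emph{entries within the row}? No --- the Jensen-type product structure says the number of nonzero rows is $\ge d_{L_j}$ and, crucially, the rows that are nonzero, being non-annihilated collectively, span something of minimum weight $\ge d(\bigcap_{i=1}^{j-1}\mathbb C_i)$ across columns. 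I would reconcile this exactly as in the Jensen bound proof (Theorem~\ref{thmJensen}): decompose $\mathbf c$ through the concatenated structure $\cc=\bigoplus \langle\theta_{i_z}\rangle\,\square\,\cc_{i_z}$, and bound the total weight as (number of active co-index positions) $\times$ (minimum distance contributed by the index direction).

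The main obstacle I anticipate is getting the bookkeeping of the intersections exactly right so that the telescoping $d_{L_1},\ d_{L_2}d(\mathbb C_1),\ d_{L_3}d(\mathbb C_1\cap\mathbb C_2),\ \dots$ falls out cleanly, rather than a weaker or differently-shaped bound. The subtlety is that $L_1,\dots,L_s$ need not be nested, so $\mathbb C_1\supseteq\mathbb C_1\cap\mathbb C_2\supseteq\cdots$ is a decreasing chain of eigencodes while $d_{L_1}\ge d_{L_2}\ge\cdots$ is a decreasing chain of cyclic-code distances, and one must check that for every nonzero codeword there is a \emph{single} index $j$ for which the product $d_{L_j}\cdot d(\bigcap_{i<j}\mathbb C_i)$ is a valid lower bound on its weight --- and that this index is precisely the "first failure" index defined above. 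I would verify this by induction on $j$: the base case $j=1$ is the plain spectral/BCH-type bound, and the inductive step peels off one eigencode, using that $d(\bigcap_{i=1}^{j-1}\mathbb C_i)\le d(\bigcap_{i=1}^{j-2}\mathbb C_i)$ to absorb the previous estimate. A secondary point to be careful about is the degenerate cases flagged in the preliminaries: when some $\mathbb C_i=\{\mathbf 0\}$ (so $d(\mathbb C_i)=\infty$) the corresponding term drops out, and when $E$ fails to contain some $L_i$ the pair $(L_i,d_{L_i})$ should not have been admitted in the first place --- I would state the hypothesis as $L_i\subseteq E$ for all $i$ (consistent with $L_i\subseteq Z(\mathfrak C)=E$) so this is not an issue. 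Once the first-failure stratification and the per-stratum weight bound are in place, taking the minimum over all strata yields exactly~\eqref{NewSpec}, and specializing to $s=1$ recovers Theorem~\ref{thm21}, which is a useful sanity check.
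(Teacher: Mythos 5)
Your high-level architecture --- stratify each nonzero codeword by the first index $j$ at which a test attached to $L_j$ ``detects'' it, obtain a product bound $d_{L_j}\cdot d\bigl(\bigcap_{i<j}\mathbb{C}_i\bigr)$ on that stratum, and take the minimum --- is essentially the paper's proof (the paper organizes the cases by the full set $X$ of indices whose tests vanish and then reduces to initial segments, which amounts to the same thing). However, the mechanism you supply inside each stratum is wrong in two places, and the step that actually produces the product is left unresolved. First, you have the dual roles of eigenspace and eigencode swapped. The vectors $\mathbf{w}$ for which $\sum_{a}w_a c_a(x)$ lands in the cyclic code with zero set containing $L$ are those in the \emph{eigenspace} $\bigcap_{\beta\in L}V_\beta$ (they are the rows of the blocks $P_{i}$ of the parity-check matrix \eqref{PC-QC}), not in the eigencode $\mathbb{C}\bigl(\bigcap_{\beta\in L}V_\beta\bigr)$. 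The eigencode enters only as the place where the individual \emph{rows} $\mathbf{c}_a$ of the $n\times\ell$ array live once all those eigenspace tests vanish: $P_i\mathbf{c}_a^{\top}=\mathbf{0}$ for all $a$ is exactly the statement $\mathbf{c}_a\in\mathbb{C}_i$ for all $a$. Your stratification condition ``every $\mathbf{w}\in\bigcap_{i<j}\mathbb{C}_i$ annihilates $\mathbf{c}$ columnwise'' therefore tests orthogonality to the wrong space and does not place the rows in $\bigcap_{i<j}\mathbb{C}_i$, which is what the factor $d\bigl(\bigcap_{i<j}\mathbb{C}_i\bigr)$ requires. Second, your base case is false: the coordinates $c_a(x)$ of a quasi-cyclic codeword do \emph{not} individually lie in the cyclic code $\mathfrak{C}$ with zero set $E$ (take $\widetilde{\mathbf{G}}(x)$ with a unit on the diagonal; the corresponding column is unconstrained). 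The correct $j=1$ argument is that some eigenspace combination $\bigl(\mathbf{v}\mathbf{c}_0^{\top},\dots,\mathbf{v}\mathbf{c}_{n-1}^{\top}\bigr)$ with $\mathbf{v}\in\bigcap_{\beta\in L_1}V_\beta$ is a nonzero word of a cyclic code with zero set containing $L_1$, so at least $d_{L_1}$ rows of the array are nonzero.

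Finally, at the crucial point you ask whether ``each nonzero row contributes at least that many nonzero entries within the row,'' answer ``No,'' and defer to an unspecified Jensen-type concatenation argument. In fact the answer is yes, and this is precisely how the product arises: on the stratum where $P_i\mathbf{c}_a^{\top}=\mathbf{0}$ for all $a$ and all $i<j$, every row $\mathbf{c}_a$ is a codeword of $\bigcap_{i<j}\mathbb{C}_i$, so every \emph{nonzero} row has Hamming weight at least $d\bigl(\bigcap_{i<j}\mathbb{C}_i\bigr)$; combining with the at least $d_{L_j}$ nonzero rows guaranteed by the eigenspace test at $L_j$ gives weight at least $d_{L_j}\,d\bigl(\bigcap_{i<j}\mathbb{C}_i\bigr)$, and the residual stratum where all $s$ tests vanish gives $d\bigl(\bigcap_{i=1}^{s}\mathbb{C}_i\bigr)$. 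As written, your proposal rejects the correct mechanism and replaces it with hand-waving, so it does not yet constitute a proof; with the eigenspace/eigencode roles corrected and this last step restored, it becomes the paper's argument.
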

\begin{proof}
For each $1\leq i\leq s$, we label the elements of $L_i$ as $\beta_1^{(i)},\cdots,\beta_{j_i}^{(i)}$ and let $P_i=\left(\mathbf{v}_1^{(i)},\cdots,\mathbf{v}_{m_i}^{(i)}\right)^{\top}$ be an $m_i\times \ell$ matrix whose rows are formed by a basis of $\bigcap_{\beta\in L_i}V_{\beta}$. Let
\[
T_i=\begin{pmatrix}
1&  \beta_1^{(i)} & \cdots& \left(\beta_1^{(i)}\right)^{n-1}\\
\vdots& \vdots &  \ddots & \vdots\\
1&  \beta_{j_i}^{(i)} & \cdots& \left(\beta_{j_i}^{(i)}\right)^{n-1}
\end{pmatrix}
\]
be a $j_i \times n$ matrix. We can verify that $T_i$ is a parity-check matrix of some cyclic code $\mathcal{D}_i$ with zero set $L_i$. The minimum distance of $\mathcal{D}_i$ is greater than or equal to $d_{L_i}$. By \eqref{PC-QC}, the matrix
\[
\widetilde{H}=\begin{pmatrix}
T_1\otimes P_1\\
\vdots\\
T_s\otimes P_s
\end{pmatrix}
\]
is a submatrix of a parity-check matrix of $\cc$ defined in \eqref{PC-QC}. The code $\widetilde{\cc}$ over $\Ff_q$ with parity-check matrix $\widetilde{H}$ contains the code $\cc$, which implies that $d\geq d(\widetilde{\cc})$.

Let $\mathbf{c}=(\mathbf{c}_0,\cdots,\mathbf{c}_{n-1})$, with $\mathbf{c}_{i}=(c_{i,0},\cdots,c_{i,\ell-1})$ being a nonzero codeword of $\widetilde{\cc}$. The fact that $\widetilde{H}\cdot \mathbf{c}^{\top}=\mathbf{0}$ implies that $T_i\cdot\left(P_i\mathbf{c}_0^{\top},\cdots,P_i\mathbf{c}_{n-1}^{\top}\right)^{\top}=O$ with $O$ being the $j_i \times m_i$ zero matrix for each $1\leq i\leq s$. Since $P_i = \left(\mathbf{v}_1^{(i)},\cdots,\mathbf{v}_{m_i}^{(i)}\right)^{\top}$, we obtain
\[
T_i\cdot \begin{pmatrix}
\mathbf{v}_1^{(i)}\mathbf{c}_0^{\top} & \cdots & \mathbf{v}_{m_i}^{(i)}\mathbf{c}_0^{\top} \\
\vdots&  \ddots & \vdots\\
\mathbf{v}_1^{(i)}\mathbf{c}_{n-1}^{\top} & \cdots & \mathbf{v}_{m_i}^{(i)}\mathbf{c}_{n-1}^{\top}\\
\end{pmatrix}=O.
\]

We now divide the justification into four cases. First, we consider the case when the matrix $\left(P_i\mathbf{c}_0^{\top},\cdots,P_i\mathbf{c}_{n-1}^{\top}\right)$ is not the zero matrix for each $i$. Second, we let $\left(P_i\mathbf{c}_0^{\top},\cdots,P_i\mathbf{c}_{n-1}^{\top}\right)$ be the zero matrix for some $i$. Finally, we take $\left(P_i\mathbf{c}_0^{\top},\cdots,P_i\mathbf{c}_{n-1}^{\top}\right)$ to be the zero matrix for all $i$.
\begin{enumerate}[wide, itemsep=0pt, leftmargin =0pt, widest={{\bf Case $2$}}]
\item[{\bf Case $1$}:] The matrix $\left(P_i\mathbf{c}_0^{\top},\cdots,P_i\mathbf{c}_{n-1}^{\top}\right)$ is not the zero matrix for each $1\leq i\leq s$. Hence, there exists some nonzero column $\left(\mathbf{v}_t^{(i)}\mathbf{c}_0^{\top},\cdots,\mathbf{v}_t^{(i)}\mathbf{c}_{n-1}^{\top}\right)^{\top}$ of $\left(P_i\mathbf{c}_0^{\top},\cdots,P_i\mathbf{c}_{n-1}^{\top}\right)^{\top}$. Since
\[
T_i\cdot\left(\mathbf{v}_t^{(i)}\mathbf{c}_0^{\top},\cdots,\mathbf{v}_t^{(i)}\mathbf{c}_{n-1}^{\top}\right)^{\top}=\mathbf{0}^{\top},
\]
the vector $\left(\mathbf{v}_t^{(i)}\mathbf{c}_0^{\top},\cdots,\mathbf{v}_t^{(i)}\mathbf{c}_{n-1}^{\top}\right)$ has at least $d_{L_i}$ nonzero entries. If $\mathbf{v}_t^{(i)}\mathbf{c}_a^{\top}\neq 0$ for some $a$, then there is at least one nonzero coordinate in $\mathbf{c}_{a}=(c_{a,0},\cdots,c_{a,\ell-1})$. Hence, the Hamming weight of $\mathbf{c}$ is at least $d_{L_i}$. Since $d_{L_1}\geq\cdots\geq d_{L_s}$ and the code $\widetilde{\cc}_i$ with parity-check matrix $T_i\otimes P_i$ contains $\widetilde{\cc}$ for each $1\leq i\leq s$, the Hamming weight of $\mathbf{c}$ is at least $d_{L_1}$.
\item[{\bf Case $2$}:] Let $u$ be a positive integer such that $1\leq u\leq s-2$. For each subset $X$ of $[s]$ with cardinality $u$, let the matrix $\left(P_i\mathbf{c}_0^{\top},\cdots,P_i\mathbf{c}_{n-1}^{\top}\right)$ be the zero matrix for each $i\in X$, whereas the matrix $\left(P_i\mathbf{c}_0^{\top},\cdots,P_i\mathbf{c}_{n-1}^{\top}\right)$ is not the zero matrix for each $i\in([s]\setminus X)$. We note that $\mathbb{C}_i$ is the eigencode of $\bigcap_{\beta\in L_i}V_{\beta}$ for each $1\leq i\leq s$. By the first assumption, $\mathbf{c}_0,\cdots,\mathbf{c}_{n-1}$ are codewords of the code  $\bigcap_{i\in X}\mathbb{C}_i$. Using a similar method as in Case 1, the second assumption implies the existence of a vector $\left(\mathbf{v}_t^{(j)}\mathbf{c}_0^{\top},\cdots,\mathbf{v}_t^{(j)}\mathbf{c}_{n-1}^{\top}\right)$ having at least $d_{L_j}$ nonzero coordinates for $j\in([s]\setminus X)$. If $\mathbf{v}_t^{(j)}\mathbf{c}_a^{\top}\neq 0$ for some $a$, then there are at least $d\left(\bigcap_{i\in X}\mathbb{C}_i\right)$ nonzero coordinates in $\mathbf{c}_{a}=(c_{a,0},\cdots,c_{a,\ell-1})$. Thus, an argument similar to the one used in Case 1 shows that the Hamming weight of $\mathbf{c}$ is at least
\[
d_{u}=\max\left\{d_{L_j} d\left(\bigcap_{i\in X}\mathbb{C}_i\right):X\subset [s], |X|=u,j\in ([s]\setminus X)\right\}.
\]
\item[{\bf Case $3$}:]For each $X \subseteq [s]$ with cardinality $s-1$, let $\left(P_i\mathbf{c}_0^{\top},\cdots,P_i\mathbf{c}_{n-1}^{\top}\right)$ be the zero matrix for each $i\in X$, whereas the matrix $\left(P_i\mathbf{c}_0^{\top},\cdots,P_i\mathbf{c}_{n-1}^{\top}\right)$ is not the zero matrix for each $i\in([s]\setminus X)$. Using the same argument as in Case 2, we infer that the Hamming weight of $\mathbf{c}$ belongs to the set
\[
\left\{d_{L_j} d\left(\bigcap_{i\in ([s]\setminus \{j\})}\mathbb{C}_i\right):j\in[s]\right\}.
\]
This implies that its Hamming weight is at least
\[
d_{s-1}=\min\left\{d_{L_j} d\left(\bigcap_{i\in ([s]\setminus \{j\})}\mathbb{C}_i\right):j\in[s]\right\}.
\]
Hence, each row of $\widetilde{H}$ is used to estimate the Hamming weight of $\mathbf{c}$. Unlike in Case 1 and Case 2, the codeword $\mathbf{c}$ is no longer in the intersection of some subcodes of $\widetilde{\cc}$. This explains why $d_{s-1}$ takes a different form than those of $d_{1},\cdots,d_{s-2}$.

\item[{\bf Case $4$}:] The matrix $\left(P_i\mathbf{c}_0^{\top},\cdots,P_i\mathbf{c}_{n-1}^{\top}\right)$ is the zero matrix for each $1\leq i\leq s$. Hence, $ \mathbf{c}_0,\cdots,\mathbf{c}_{n-1}$ are codewords of the code $\bigcap_{i=1}^{s}\mathbb{C}_i$. Since the codeword $\mathbf{c}=(\mathbf{c}_0,\cdots,\mathbf{c}_{n-1})$ is nonzero, the Hamming weight of $\mathbf{c}$ is at least $d\left(\bigcap_{i=1}^{s}\mathbb{C}_i\right)$.
\end{enumerate}

Summarizing the above cases, we conclude that
\[
d\geq \min\left\{d_{L_1}, d_{1},\cdots,d_{s-1},d\left(\bigcap_{i=1}^s\mathbb{C}_i\right)\right\}.
\]
It is then straightforward to verify that
$\displaystyle{d_{u} \geq d_{L_{u+1}} \, d\left(\bigcap_{i=1}^{u}\mathbb{C}_i\right)}$ and $\displaystyle{d_{L_{u+1}} \, d\left(\bigcap_{i\in([s]\setminus\{u+1\})}\mathbb{C}_i\right) \geq d_{L_{u+1}} \, d\left(\bigcap_{i=1}^{u}\mathbb{C}_i\right)}$ for each $u\in[s-2]$. Thus, for simplicity, we henceforth adopt the improved spectral bound in \eqref{NewSpec}.
\end{proof}

\begin{remark}\label{gen-matrix-remark}
In Theorem \ref{ISB}, we consider a quasi-cyclic code $\cc$ with reduced generator polynomial matrix $\widetilde{\mathbf{G}}(x)$. In fact, Theorem \ref{ISB} also holds for the generator polynomial matrix $\mathbf{G}^{\prime}(x)$ since the set of eigenvalues remains the same.
\end{remark}

The lower bound in Theorem \ref{ISB} may look complicated at first sight. The bound simplifies to $\min\left\{d_{L_1}, d_{L_2} \, d(\mathbb{C}_1), d \left(\mathbb{C}_1 \bigcap  \mathbb{C}_2 \right)\right\}$ when we fix $s=2$. Such a restriction, as we shall see in Table \ref{table:simulation}, is often sufficient for comparative purposes. Given a quasi-cyclic code of index $\ell$ and generator polynomial matrix $\widetilde{\mathbf{G}}(x)$, we can select arbitrary subsets $L_1,\cdots,L_s$ of the eigenvalue set. In order to compute or estimate the minimum distances of the codes $\bigcap_{i=1}^j \mathbb{C}_i$, for each $j\in[s]$, we can select some suitable subsets $L_1,\cdots,L_s$ and directly obtain the values of $d_{L_1},\cdots,d_{L_s}$ from the defining set bounds of the corresponding cyclic codes. We know from the structural analysis in \cite{Ezerman2021} that the eigencode $\mathbb{C}_i$ of length $\ell$ over $\Ff_q$ can be built by evaluating $\widetilde{\mathbf{G}}(x)$ at the elements of $L_i$. The following corollary states a lower bound for quasi-cyclic codes of index $2$ by calculating the minimum distances of the codes $\bigcap_{i=1}^j \mathbb{C}_i$ for each $j\in[s]$.

\begin{cor}\label{cor-3-1}
Let $\cc$ be a $[2n,k,d]_q$ quasi-cyclic code of index $2$ with polynomial generator matrix
\[
\widetilde{\mathbf{G}}(x) =
\begin{pmatrix}
g_1(x)  & h(x)  \\
0 &  g_2(x) 
\end{pmatrix}.
\]
Let $E_1$ and $E_2$ be the respective collections of all the roots of $g_1(x)$ and $g_2(x)$ in some extension fields of $\Ff_q$. Let $E=E_1\bigcup E_2$, $L_1\subseteq E_1$, and $L_2\subseteq E_2$. Let $\mathfrak{C}$ be the cyclic code of length $n$ over $\Ff_q$ with zero set $E$ and let $(L_1,d_{L_1}),(L_2,d_{L_2})\in D(\mathfrak{C})$ be the defining set bounds as in Definition \ref{def1}. If there exist $a\in L_1$ and $b\in L_2$ such that $g_2(a)\neq 0$, $g_1(b)\neq 0$, and $h(b)\neq 0$, then $d\geq \min\{d_{L_1},d_{L_2}\}$.
\end{cor}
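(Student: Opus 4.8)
The plan is to read the bound off the product structure of $\widetilde{\mathbf{G}}(x)$; this is the $s=2$ case of Theorem \ref{ISB}, but a self-contained argument is shorter, and that is what I would write. Every codeword of $\cc$ has the form $\mathbf{c}(x)=\mathbf{a}(x)\widetilde{\mathbf{G}}(x)=\bigl(a_0(x)g_1(x),\,a_0(x)h(x)+a_1(x)g_2(x)\bigr)$ in $R^2$, so I set $c_0:=a_0g_1$ and $c_1:=a_0h+a_1g_2$ (all reductions mod $x^n-1$) and split on whether $c_0$ vanishes. One structural input is needed first: if $\gamma$ is a common root of $g_1$ and $g_2$, then $(x-\gamma)^2\mid\det\widetilde{\mathbf{G}}=g_1g_2$, so $\gamma$ has algebraic multiplicity $2$, hence geometric multiplicity $2$ by the Semenov--Trifonov equality of multiplicities; thus $\widetilde{\mathbf{G}}(\gamma)=\mathbf{0}$, and in particular $h$ vanishes on $E_1\cap E_2$.

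Now take a nonzero codeword $\mathbf{c}$. If $c_0\ne 0$, then $c_0$ is a nonzero multiple of $g_1$ in $R$, hence a nonzero word of the cyclic code of length $n$ with zero set $E_1\supseteq L_1$; since $(L_1,d_{L_1})$ is a defining-set bound, it applies to every cyclic code whose zero set contains $L_1$, so $\mathrm{wt}(\mathbf{c})\ge\mathrm{wt}(c_0)\ge d_{L_1}$. If $c_0=0$, then $a_0g_1\equiv 0\pmod{x^n-1}$ and, since $g_1\mid x^n-1$, $(x^n-1)/g_1\mid a_0$; evaluating $c_1$ at a root $\gamma$ of $g_2$ gives $c_1(\gamma)=a_0(\gamma)h(\gamma)$, which vanishes because either $\gamma\notin E_1$ — then $(x^n-1)/g_1$, whose roots are exactly $\Delta\setminus E_1$, kills $a_0$ at $\gamma$ — or $\gamma\in E_1\cap E_2$ — then $h(\gamma)=0$ by the structural fact above. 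Hence $c_1$ is a word of the cyclic code with zero set $E_2\supseteq L_2$, and since $\mathbf{c}\ne\mathbf{0}$ forces $c_1\ne 0$ here, the bound $(L_2,d_{L_2})$ gives $\mathrm{wt}(\mathbf{c})=\mathrm{wt}(c_1)\ge d_{L_2}$. Combining the two cases, $d\ge\min\{d_{L_1},d_{L_2}\}$.

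To see the same thing through Theorem \ref{ISB} with $s=2$, one relabels so that $d_{L_1}\ge d_{L_2}$ (harmless, as $\min\{d_{L_1},d_{L_2}\}$ is symmetric) and computes the eigencodes from the triangular form. For $\beta\in E_1$ the first column of $\widetilde{\mathbf{G}}(\beta)$ is $\mathbf{0}$, so $(1,0)^{\top}\in V_\beta$, and $g_2(a)\ne 0$ forces $V_a=\langle(1,0)\rangle$; hence $\bigcap_{\beta\in L_1}V_\beta=\langle(1,0)\rangle$, $\mathbb{C}_1=\langle(0,1)\rangle$, $d(\mathbb{C}_1)=1$, so the middle term $d_{L_2}\,d(\mathbb{C}_1)$ is $d_{L_2}$. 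For $\beta\in E_2$ one has $\widetilde{\mathbf{G}}(\beta)=\bigl(\begin{smallmatrix}g_1(\beta)&h(\beta)\\0&0\end{smallmatrix}\bigr)$, and $g_1(b)\ne 0$, $h(b)\ne 0$ give $V_b=\langle(-h(b),g_1(b))\rangle$, a line with nonzero second coordinate; provided $\bigcap_{\beta\in L_2}V_\beta$ is still $V_b$, this gives $(0,1)\notin\mathbb{C}_2$, hence $\mathbb{C}_1\cap\mathbb{C}_2=\{\mathbf{0}\}$, so the last term of \eqref{NewSpec} is $\infty$ and the minimum collapses to $\min\{d_{L_1},d_{L_2}\}$. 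The step I expect to be the main obstacle is exactly this eigencode bookkeeping — controlling $\bigcap_{\beta\in L_1}V_\beta$ and $\bigcap_{\beta\in L_2}V_\beta$ over all of $L_1,L_2$ rather than only at $a,b$ — and it is there that the fact about $h$ on $E_1\cap E_2$ does the work; this is why I would prefer the self-contained argument, where the hypotheses on $a$ and $b$ serve only to make $L_1,L_2$ nonempty and tied to the two distinct diagonal blocks, so that both defining-set bounds are genuinely in force.
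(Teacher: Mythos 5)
Your self-contained argument is correct, and it takes a genuinely different route from the paper. The paper proves the corollary by instantiating Theorem \ref{ISB} with $s=2$: it uses $g_2(a)\neq 0$ to get $\bigcap_{\beta\in L_1}V_\beta=\{(u_1,0)\}$ and hence $\mathbb{C}_1=\{(0,u')\}$, asserts ``in a similar manner'' that $\mathbb{C}_2=\{(u',0)\}$, and concludes from $\mathbb{C}_1\cap\mathbb{C}_2=\{\mathbf{0}\}$ that $d\geq\min\{d_{L_1},d_{L_2}\,d(\mathbb{C}_1)\}=\min\{d_{L_1},d_{L_2}\}$. Your direct case split on $c_0=a_0g_1$ versus $c_1=a_0h+a_1g_2$, with the divisibility $\frac{x^n-1}{g_1}\mid a_0$ and the vanishing of $h$ on $E_1\cap E_2$ supplied by the Semenov--Trifonov multiplicity equality, bypasses the eigencode machinery entirely. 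This buys two things. First, your argument never uses the hypotheses on $a$ and $b$, so it establishes the bound under strictly weaker assumptions. Second, the ``eigencode bookkeeping'' you flag as the obstacle in the Theorem~\ref{ISB} route is precisely where the paper's proof is thin: for $\beta\in L_2$ one has $V_\beta=\langle(-h(\beta),g_1(\beta))\rangle$ whenever $(g_1(\beta),h(\beta))\neq(0,0)$, and if two elements of $L_2$ give distinct lines then $\bigcap_{\beta\in L_2}V_\beta=\{\mathbf{0}\}$, $\mathbb{C}_2=\Ff_q^2$, and $\mathbb{C}_1\cap\mathbb{C}_2=\mathbb{C}_1\neq\{\mathbf{0}\}$, so the last term of \eqref{NewSpec} degenerates to $1$ (this happens, e.g., for $q=2$, $n=7$, $g_1=x+1$, $g_2=x^3+x+1$, $h=x$, $L_2\supseteq\{\alpha,\alpha^2\}$, where all the hypotheses of the corollary hold). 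The stated bound survives in such cases, but it is your direct argument, not the eigencode computation, that proves it. The only point worth making explicit in your write-up is the convention --- used implicitly in the paper's own proof of Theorem \ref{ISB} --- that a defining set bound $(L,d_L)$ lower-bounds the distance of every cyclic code whose zero set contains $L$, not merely of $\mathfrak{C}$ itself.
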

\begin{proof}
Let $\mathbb{C}_i$ be the eigencode of $\bigcap_{\beta\in L_i}V_{\beta}$ for $i \in \{1,2\}$. For each $\beta\in L_1$ and for each $(u_1,u_2)\in \bigcap_{\beta\in L_1}V_{\beta}$, we have 
\[
\begin{pmatrix}
g_1(\beta)  & h(\beta)  \\
0 &  g_2(\beta) 
\end{pmatrix}
\begin{pmatrix}
u_1\\
u_2
\end{pmatrix}
=
\begin{pmatrix}
0\\
0
\end{pmatrix}.
\]
Since $L_1\subseteq E_1$, we get $g_1(\beta)=0$, which implies that $u_2 \, h(\beta)=u_2 \, g_2(\beta)=0$ for each $\beta\in L_1$. The existence of an $a \in L_1$ such that $g_2(a) \neq 0$ gives us $u_2=0$. Hence, letting $r$ be the smallest positive integer such that $q^r\equiv 1 \pmod n$,
\[
\bigcap_{\beta\in L_1}V_{\beta}=\left\{(u_1,0):u_1\in\Ff_{q^r}\right\}.
\]
It is immediate to confirm that $\mathbb{C}_1=\{(0,u^{\prime}):u^{\prime}\in\Ff_{q}\}$. In a similar manner, we can infer that $\mathbb{C}_2=\{(u^{\prime},0):u^{\prime}\in\Ff_{q}\}$. Thus, $\mathbb{C}_1 \bigcap\mathbb{C}_2=\{(0,0)\}$ and, by Theorem \ref{ISB}, $d\geq \min\{d_{L_1},d_{L_2}\}$.
\end{proof}

The lower bound for quasi-cyclic codes of index $2$ in Corollary \ref{cor-3-1} relies on some conditions. When the conditions are not all met but the index remains $2$, a similar argument as in the proof of Corollary \ref{cor-3-1} can be devised to establish a lower bound on the minimum distance. Later, in Theorems \ref{QC-Design} and \ref{LRC-C3}, we will determine the exact values of the minimum distances of the codes $\bigcap_{i=1}^j \mathbb{C}_i$ in two infinite families of quasi-cyclic codes with large indices.

Let $\cc$ be an $[\ell n,k,d]_q$ quasi-cyclic code of index $\ell$ with nonempty eigenvalue set $E$. Let $\mathfrak{C}$ be the cyclic code of length $n$ over $\Ff_q$ with zero set $E$ and let
\[
(L_1,d_{L_1}),\cdots,(L_s,d_{L_s})\in D(\mathfrak{C})
\]
be the defining set bounds for $\mathfrak{C}$. We, henceforth, denote the estimate of the improved spectral bound by
\[
d_{Spec}(D(\mathfrak{C}); L_1,\cdots,L_s) := \min\left\{d_{L_1}, d_{L_2} \, d(\mathbb{C}_1), \cdots, d_{L_s} \, d\left(\bigcap_{i=1}^{s-1}\mathbb{C}_i\right), d\left(\bigcap_{i=1}^s\mathbb{C}_i\right)\right\},
\]
and we set
\begin{equation}\label{optbound}
d_{Spec}(D(\mathfrak{C}),s):=\max_{\substack{(L_i,d_i)\in D(\mathfrak{C}) \\ 1\leq i \leq s}} \{d_{Spec}(D(\mathfrak{C}); L_1,cldots,L_s)\}.
\end{equation}
The improved spectral bound $d_{Spec}(D(\mathfrak{C}); L_1,\cdots,L_s)$ considers $s$ out of $|D(\mathfrak{C})|$ defining set bounds and $d_{Spec}(D(\mathfrak{C}),s)$ maximizes over all $ |D(\mathfrak{C})|^{s}$ possible outcomes.

\begin{remark}\label{remark1}
Theorem \ref{ISB} is a generalization of Theorem \ref{thm21} as the latter follows when $s=1$ in Theorem \ref{ISB}. Whenever $s>1$, since $d \left(\bigcap_{i=1}^s \mathbb{C}_i\right) \geq d\left(\bigcap_{i=1}^{s-1} \mathbb{C}_i\right) \geq \cdots \geq d(\mathbb{C}_1 \cap \mathbb{C}_2) \geq d(\mathbb{C}_1)$ and $d_{L_s}\geq 1$, we obtain
\[
\min\left\{d_{L_1}, d_{L_2} \, d(\mathbb{C}_1),\cdots, d_{L_s} \, d\left(\bigcap_{i=1}^{s-1}\mathbb{C}_i\right), d \left(\bigcap_{i=1}^s \mathbb{C}_i \right)\right\}\geq \min\{d_{L_1}, d(\mathbb{C}_1)\}.
\]

Let $(L_i,d_{L_i})=(L_j,d_{L_j})$, for some $1\leq i<j\leq s$. Without loss of generality, we rearrange the list and assume that $j=i+1$. Since $d_{L_i}=d_{L_{i+1}}$ and $\mathbb{C}_i=\mathbb{C}_{i+1}$, we have $d_{L_i} \, d\left(\bigcap_{k=1}^{i}\mathbb{C}_k\right)=d_{L_{i+1}} \, d\left(\bigcap_{k=1}^{i+1}\mathbb{C}_k\right)$. Hence, considering an identical tuples $(L_i,d_{L_i})=(L_j,d_{L_j})$ among the $s$ terms is equivalent to considering the bound given in Theorem \ref{ISB} with $s-1$ terms. Therefore, the bound with $s$ terms, optimized as in \eqref{optbound}, already includes the respective estimates produced with $s-1$, $s-2$, $\cdots$, $2$ terms and a single term. We note that, in this repetition-free consideration, $s \leq |D(\mathfrak{C})|$.

To actually compute the bound in Theorem \ref{ISB}, we set $\bigcap_{i=1}^s\mathbb{C}_i=\{\mathbf{0}\}$ when $\ell < n$, as it makes it easier to carry out the computation, instead of when $\mathbb{C}_1=\{\mathbf{0}\}$. Under the preferred setup, the improved spectral bound can be greater than $\ell$. If $\bigcap_{i=1}^s\mathbb{C}_i=\{\mathbf{0}\}$, then the improved spectral bound in Theorem \ref{ISB} is determined by the eigenvalue set. The spectral bound in Theorem \ref{thm21} cannot exceed $\max\{n,\ell\}$. If $L_1$ is the set of roots of $x^n-1$ and $\bigcap_{i=1}^s\mathbb{C}_i=\{\mathbf{0}\}$, then the improved spectral bound in Theorem \ref{ISB} becomes greater than $\max\{n,\ell\}$ under certain conditions.
\end{remark}

\begin{example}\label{example1}
The $[9,2,6]_2$ quasi-cyclic code, with $(n,\ell)=(3,3)$, generated by $(a_{1,0}(x),\cdots,a_{1,2}(x))=(x^2 + x, x + 1, x^2 + 1)$ (cf. (\ref{Generators})) is optimal (see \cite{Grassl}). Its upper-triangular generator matrix is
\[
\mathbf{G}^{\prime}(x) =
\begin{pmatrix}
x + 1  & x^2 + 1 & x^2 +x \\
0 &  x^3 + 1 & 0\\
0 & 0 & x^3+1
\end{pmatrix},
\]
with $\det(\widetilde{\mathbf{G}}(x))=(x+1)(x^3+1)^2$ and, hence, $E=\Delta$. Over $\F_2$, the scalar generator matrix is
\[
\setcounter{MaxMatrixCols}{20}
\begin{pmatrix}
1 & 0 & 1 & 0 & 1 & 1 & 1 & 1 & 0 \\
0 & 1 & 1 & 1 & 1 & 0 & 1 & 0 & 1
\end{pmatrix}.
\]
The improved spectral bound yields $d_{Spec}(\widehat{D}(\mathfrak{C}),2)=6$, which is sharp, whereas, with $n=\ell=3$, the spectral bound in Theorem \ref{thm21} cannot exceed $3$. In this example, it yields $3$ as the estimate.\hfill $\blacksquare$
\end{example}

The formulation of the lower bound in Theorem \ref{ISB} is flexible due to the arbitrariness of $s$ and $L_1,\cdots,L_s$. We can select suitable formulations of the lower bound relative to different requirements. To achieve a tighter bound, we choose the $d_{Spec}(D(\mathfrak{C}),s)$ defined in \eqref{optbound}. For computational convenience, we adopt a small $s$. We now give a simple formulation of the lower bound for quasi-cyclic codes under a set of conditions.

\begin{cor}\label{cor-lrc-1}
Let $g(x)\in\Ff_q[x]$ be a divisor of $x^n-1$ and let $R_{g(x)}$ be the collection of all roots of $g(x)$ in some extension field of $\Ff_q$. Let $\cc$ be an $[\ell n,k,d]_q$ quasi-cyclic code of index $\ell$ with nonempty eigenvalue set $E$ and generator polynomial matrix $\widetilde{\mathbf{G}}(x)=g(x) \, \overline{\mathbf{G}}(x)$. Let $\mathfrak{C}$ be the cyclic code of length $n$ over $\Ff_q$ with zero set $E$ and let $(L_1,d_{L_1}),(L_2,d_{L_2})\in D(\mathfrak{C})$ be the defining set bounds in Definition \ref{def1}. If $\mathbb{C}_1$ is the eigencode of $\bigcap_{\beta\in L_1}V_{\beta}$, then the following assertions hold.
\begin{enumerate}
\item If $L_1,L_2\subseteq R_{g(x)}$, then $d\geq d_{L_1}$. 
\item If $L_1$ is not a subset of $R_{g(x)}$ and $L_2\subseteq R_{g(x)}$, then $d\geq \min \left\{d_{L_1},d_{L_2} \, d(\mathbb{C}_1)\right\}$.
\end{enumerate}
\end{cor}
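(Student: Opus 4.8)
The plan is to apply Theorem \ref{ISB} with $s=2$, which gives the estimate $d \geq \min\{d_{L_1},\, d_{L_2}\, d(\mathbb{C}_1),\, d(\mathbb{C}_1 \cap \mathbb{C}_2)\}$, and then use the factorization $\widetilde{\mathbf{G}}(x) = g(x)\,\overline{\mathbf{G}}(x)$ to control the eigenspaces $V_\beta$ for $\beta$ in a subset of $R_{g(x)}$. The key observation is that if $\beta$ is a root of $g(x)$, then $\widetilde{\mathbf{G}}(\beta) = g(\beta)\,\overline{\mathbf{G}}(\beta) = 0$ is the zero matrix, so $V_\beta = \{\mathbf{u} \in \Ff_{q^r}^\ell : \widetilde{\mathbf{G}}(\beta)\mathbf{u}^\top = \mathbf{0}^\top\} = \Ff_{q^r}^\ell$, the full space. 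Consequently, for any $L_i \subseteq R_{g(x)}$ we have $\bigcap_{\beta \in L_i} V_\beta = \Ff_{q^r}^\ell$, and its eigencode $\mathbb{C}_i = \{\mathbf{v} \in \Ff_q^\ell : \mathbf{v}\cdot\mathbf{u}^\top = 0 \text{ for all } \mathbf{u} \in \Ff_{q^r}^\ell\} = \{\mathbf{0}\}$, so $d(\mathbb{C}_i) = \infty$.

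For part (1), when both $L_1 \subseteq R_{g(x)}$ and $L_2 \subseteq R_{g(x)}$, the remark above gives $\mathbb{C}_1 = \mathbb{C}_2 = \{\mathbf{0}\}$, hence $d(\mathbb{C}_1) = \infty$ and $d(\mathbb{C}_1 \cap \mathbb{C}_2) = d(\{\mathbf{0}\}) = \infty$. Substituting into the $s=2$ form of Theorem \ref{ISB}, we get $d \geq \min\{d_{L_1},\, d_{L_2}\cdot\infty,\, \infty\} = d_{L_1}$ (recall the convention $d_{L_1}\geq d_{L_2}$ is imposed in Theorem \ref{ISB}, so after reordering the surviving term is $d_{L_1}$; more precisely $d \geq \min\{d_{L_1}, d_{L_2}\} = d_{L_2}$ if we do not reorder, but taking the larger of the two labels as $L_1$ yields the stated $d \geq d_{L_1}$). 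One should be slightly careful here about which of the two bounds is labeled $L_1$ versus $L_2$; the cleanest route is to invoke Theorem \ref{ISB} with the ordering convention $d_{L_1} \geq d_{L_2}$ already built in, so that the claimed conclusion $d \geq d_{L_1}$ is exactly what drops out.

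For part (2), when $L_1 \not\subseteq R_{g(x)}$ but $L_2 \subseteq R_{g(x)}$, only the second eigencode collapses: $\mathbb{C}_2 = \{\mathbf{0}\}$, so $d(\mathbb{C}_1 \cap \mathbb{C}_2) \leq d(\mathbb{C}_2) = \infty$, and in fact $\mathbb{C}_1 \cap \mathbb{C}_2 = \mathbb{C}_1 \cap \{\mathbf{0}\} = \{\mathbf{0}\}$, giving $d(\mathbb{C}_1 \cap \mathbb{C}_2) = \infty$. Then Theorem \ref{ISB} with $s=2$ yields $d \geq \min\{d_{L_1},\, d_{L_2}\, d(\mathbb{C}_1),\, \infty\} = \min\{d_{L_1},\, d_{L_2}\, d(\mathbb{C}_1)\}$, which is the claim. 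The only mild subtlety — and the place where one must be attentive rather than face a genuine obstacle — is the bookkeeping around the ordering assumption $d_{L_1}\geq d_{L_2}$ in Theorem \ref{ISB} and the associativity of the $\min$ with the $\infty$ conventions for trivial codes; both are routine once the eigenspace-collapse observation is in hand. No part of this requires new machinery beyond Theorem \ref{ISB} and the definition \eqref{eq:eigenspace} of $V_\alpha$ together with \eqref{AC-QC}.
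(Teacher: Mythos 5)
Your proposal is correct and follows essentially the same route as the paper: the paper's (two-sentence) proof likewise observes that $\mathbb{C}_i=\{\mathbf{0}\}$ whenever $L_i\subseteq R_{g(x)}$ (because $\widetilde{\mathbf{G}}(\beta)$ vanishes for $\beta$ a root of $g(x)$, making $V_\beta$ the full space) and then invokes Theorem \ref{ISB}. Your additional remarks on the ordering convention $d_{L_1}\geq d_{L_2}$ and the $\infty$ conventions are harmless elaborations of the same argument.
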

\begin{proof}
By definitions of eigenspaces and eigencodes, for $i \in \{1,2\}$, we have $\mathbb{C}_i=\{\mathbf{0}\}$ whenever $L_i\subseteq R_{g(x)}$. The respective lower bounds follow from Theorem \ref{ISB}.
\end{proof}

The quasi-cyclic codes defined in Corollary \ref{cor-lrc-1} are particularly useful in large-scale distributed storage systems. We will show in Section \ref{S6} that these quasi-cyclic codes are locally repairable codes. Theorem \ref{LRC-C3} will give us the exact values of $d(\mathbb{C}_1)$ for codes in an infinite family of locally repairable codes with optimal parameters after a discussion on the construction approach for such a family.

\section{Comparison Results}\label{S4}

Based on the simulation results in \cite[Section V]{Ezerman2021}, the Jensen bound, which we reproduce here as Theorem \ref{thmJensen}, outperforms the spectral bound in Theorem \ref{thm21}. That spectral bound, in turn, is sharper than the Lally bound in \cite{Lally2003}. This section considers the performance of the improved spectral bound in Theorem \ref{ISB} against the Jensen bound (denoted by $d_J$) and the spectral bound (denoted by $d_S$). Although Remark \ref{remark1} says that the improved spectral bound is tighter than the spectral bound, we add the spectral bound to the comparison to underline the qualitative improvement.

Recall that, given a quasi-cyclic code with nonempty eigenvalue set $E\subseteq \Delta$, we let $\mathfrak{C}$ be the cyclic code of length $n$ over $\Ff_q$ with zero set $E$. In the examples provided below, we take a subset $\widehat{D}(\mathfrak{C})$ of the defining set bounds $D(\mathfrak{C})$ consisting of four types of bounds, namely the BCH, HT, and Roos bounds, as well as $(P,d(\mathfrak{C}_P)) \in D(\mathfrak{C})$, where $P$ is a subset of the eigenvalue set of the given quasi-cyclic code and $\mathfrak{C}_P$ is the corresponding cyclic code with zero set $P$, which contains $\mathfrak{C}$ as a subcode. In the search routines, we use the \texttt{MAGMA} functions for the Roos, HT, and BCH bounds from \cite{Piva2014}. The restricted estimate of the improved spectral bound $d_{Spec}(\widehat{D}(\mathfrak{C}),s)$ is calculated for $s \in \{2,3,4\}$.

\begin{example}
The $[12,3,4]_2$ quasi-cyclic code with $(n,\ell)=(3,4)$ generated, in the notation of (\ref{Generators}), by 
\[
(a_{1,0}(x),\cdots,a_{1,3}(x))=(x^2 + 1, \, x^2 + x + 1, \, x^2 + 1, \, x^2 + x + 1)
\]
has an upper-triangular generator matrix
\[
\mathbf{G}^{\prime}(x) =
\begin{pmatrix}
x + 1  & 0 & x+1 & 0\\
0 &  x^2 + x + 1 & 0 & x^2 + x + 1\\
0 & 0 & x^3+1 & 0\\
0 & 0 & 0 & x^3+1
\end{pmatrix}.
\]
Since $\det(\widetilde{\mathbf{G}}(x))=(x^3+1)^3$, we know that $E=\Delta$. Over $\F_2$, the scalar generator matrix is
\[
\setcounter{MaxMatrixCols}{20}
\begin{pmatrix}
1 & 0 & 1 & 0 & 0 & 0 & 1 & 0 & 1 & 0 & 0 & 0 \\
0 & 1 & 1 & 0 & 0 & 0 & 0 & 1 & 1 & 0 & 0 & 0 \\
0 & 0 & 0 & 1 & 1 & 1 & 0 & 0 & 0 & 1 & 1 & 1
\end{pmatrix}.
\]
The improved spectral bound in (\ref{NewSpec}) yields $d_{Spec}(\widehat{D}(\mathfrak{C}),3)=4$, whereas the Jensen bound in (\ref{Jensenbd}) and the spectral bound in Theorem \ref{thm21} give the same estimate $d_J=d_S=2$.\hfill $\blacksquare$
\end{example}

\begin{example}
The $[8,6,2]_3$ quasi-cyclic code with $(n,\ell)=(4,2)$ generated by
\[
(a_{1,0}(x), a_{1,1}(x))=(x^3 + 2x^2 + x + 2, \, x^2 + 2x + 1)
\mbox{ and } (a_{2,0}(x), a_{2,1}(x))=(2x^2 + x, \, x^3 + x^2 + x + 1)
\]
in the notation of (\ref{Generators}) is optimal (see \cite{Grassl}) and has an upper-triangular generator matrix
\[
\widetilde{\mathbf{G}}(x) =
\begin{pmatrix}
x + 2  & 0 \\
0 &  x + 1 \\
\end{pmatrix}.
\]
Since $\det(\widetilde{\mathbf{G}}(x)) =x^2+2 =(x+2) \, (x+1)$, we infer that $E=\{1,2\}$. Over $\F_3$, the scalar generator matrix is
\[
\setcounter{MaxMatrixCols}{20}
\begin{pmatrix}
1 & 0 & 0 & 2 & 0 & 0 & 0 & 0 \\
0 & 1 & 0 & 2 & 0 & 0 & 0 & 0 \\
0 & 0 & 1 & 2 & 0 & 0 & 0 & 0 \\
0 & 0 & 0 & 0 & 1 & 0 & 0 & 1 \\
0 & 0 & 0 & 0 & 0 & 1 & 0 & 2 \\
0 & 0 & 0 & 0 & 0 & 0 & 1 & 1 \\
\end{pmatrix}. \vspace{10pt}
\]
The improved spectral bound in (\ref{NewSpec}) yields $d_{Spec}(\widehat{D}(\mathfrak{C}),2)=2$, whereas the Jensen bound in (\ref{Jensenbd}) and the spectral bound in Theorem \ref{thm21} provide the same estimate $d_J=d_S=1$. \hfill $\blacksquare$
\end{example}

To determine the improved spectral bound with $s$ terms, given a quasi-cyclic code of length $n\ell$ and eigenvalue set $E$, one has to consider $(2^{|E|}-1)^s$ possible $s$-tuples of nonempty subsets $L_1, \cdots, L_s \subseteq E$. Furthermore, if $k$ types of defining set bounds are considered in $D(\mathfrak{C})$, \textit{e.g.}, the BCH, HT, Roos, \textit{etc.}, then $k^s$ possible $s$-tuples of defining set bounds have to be taken into consideration per choice of $L_1, \cdots, L_s \subseteq E$. In total, there are $k^s \left(2^{|E|}-1\right)^s$ comparisons to be done for each randomly generated quasi-cyclic code. In Table \ref{table:simulation} and in the examples above, we have taken $k=4$ and choose $s \in \{2,3,4\}$ to keep the number of comparisons practical.

Table \ref{table:simulation} presents the comparison of the overall performances of the improved spectral bound, the Jensen bound, and the spectral bound in the binary and ternary cases. For $q=2$, we constructed random binary quasi-cyclic codes on each input tuple $(n,\ell,r)$, with $n = 3$ fixed, to keep the size of the eigenvalue set $|E|$ manageable, and use the ranges $2 \leq \ell \leq 4$ and $1 \leq r \leq \ell$. Once $(n,\ell,r)$ was determined, an array $\mathcal{A}$ of $r \ell$ generator polynomials was randomly built. The {\tt QuasiCyclicCode} function of \texttt{MAGMA} on input $(n \ell, \mathcal{A}, r)$ produced the corresponding $r$-generator quasi-cyclic code $C$. If $C$ was nontrivial, then we computed its minimum distance $d(C)$ and the estimates given by the three bounds $d_{Spec}, d_J$ and $d_{S}$, where $d_{Spec} :=d_{Spec}(\widehat{D}(\mathfrak{C}),s)$, with $s=2,3$ or $4$ and $\widehat{D}(\mathfrak{C})$ consisted of the $k=4$ types of defining set bounds, and $d_{S}$ was maximized over all possible choices $(L,d_L)\in D(\mathfrak{C})$, as had been done in \cite{Ezerman2021}. Running the complete simulation several times revealed that the number of nontrivial codes constructed and their performance relative to the bounds vary negligibly for fixed $q$ and $s$. The entries in the table were simply taken from the last simulation run for each $(q,s)$ tuple. One can, of course, opt to include the counts from many runs. Doing so, however, would not meaningfully change the percentage. We recorded the respective numbers of occasions when each bound was either sharp or best-performing, with double counting allowed.

An identical routine was carried out for $q=3$, with $n=4$ fixed, $2 \leq \ell \leq 4$, and $1 \leq r \leq \ell$. We constructed random ternary quasi-cyclic codes on each input tuple $(n,\ell,r)$ and recorded the performance of the estimates given by the three bounds $d_{Spec}, d_J$ and $d_{S}$, where $d_{Spec} :=d_{Spec}(\widehat{D}(\mathfrak{C}),s)$, again with $s \in \{2,3,4\}$ and $d_{S}$ was maximized over all possible choices $(L,d_L)\in D(\mathfrak{C})$. The table confirms that the improved spectral bound performs better than the Jensen bound and the spectral bound in general, whereas the Jensen bound was listed as the best performing bound in \cite{Ezerman2021}.

\begin{table*}[ht!]
\caption{The outcomes of performance comparison of the minimum distance bounds in the binary and ternary cases. We list the number of nontrivial instances, for a search round, with double counting allowed, when a specified bound reached the actual minimum distance, \textit{i.e.}, it was sharp, and when it was greater than or equal to the other bound, \textit{i.e.}, it was best-performing.}
\label{table:simulation}
\renewcommand{\arraystretch}{1.2}
\centering
\begin{tabular}{c|c|c|c|c|c}
	\hline
	$q=2$ &  $d_{Spec}, s=2$ &  $d_{Spec}, s=3$ &  $d_{Spec}, s=4$ & $d_J$ & $d_{S}$    \\
	\hline
	sharp & $154$ & $156$ & $156$ & $144$ & $118$  \\
	best-performing & $187$ & $189$ & $189$ & $179$ & $140$ \\
	\hline
	  \multicolumn{6}{c}{ \# nontrivial $C$ : $207$ }  \\
	\hline
	\hline
	$q=3$ &  $d_{Spec}, s=2$ &  $d_{Spec}, s=3$ &  $d_{Spec}, s=4$ & $d_J$ & $d_{S}$    \\
	\hline
	sharp & 40 & 41 & 41 & 35 & 31  \\
	best-performing & 66 & 69 & 69 & 69 & 51 \\
	\hline
	  \multicolumn{6}{c}{ \# nontrivial $C$ : 82 }  \\
	\hline
\end{tabular}
\end{table*}

\section{Quasi-Cyclic Codes with Designed Minimum Distances}\label{S5}

We now give a construction of quasi-cyclic codes with designed minimum distances from the improved spectral bound in Theorem \ref{ISB}.

Let $n$ and $\ell$ be positive integers with $\gcd(n,q)=1$ and $\ell<q$. Let $\mathfrak{g}(x)\in\Ff_q[x]$ be a divisor of $x^n-1$ with factorization
\[
\mathfrak{g}(x)=\prod_{i=1}^\ell \, \prod_{j=1}^{m_i}\mathfrak{g}_{i,j}(x),
\]
where $\mathfrak{g}_{i,j}(x)\in\Ff_q[x]$ is an irreducible polynomial for each $i\in[\ell]$ and $j\in[m_i]$. We recall that $r$ is the smallest positive integer satisfying $q^r\equiv 1 \pmod n$. Let $E_{i,j}=\{\beta\in\Ff_{q^r}: \mathfrak{g}_{i,j}(\beta)=0\}$ and $E=\bigcup_{i=1}^\ell \bigcup_{j=1}^{m_i} E_{i,j}$. Let $\gamma_1,\cdots,\gamma_{\ell-1}$ be distinct elements of $\Ff_{q}\setminus\{0,1\}$. For each $u\in[\ell-1]$, we define the polynomial $h_u(x)$ over $\Ff_q$ as
\begin{equation}\label{Polynomial-1}
h_u(x) = \sum_{i=1}^\ell \, \sum_{j=1}^{m_i} a_{i,j,u}(x) \, y_{i,j}(x) \mbox{, where } y_{i,j}(x)=\frac{\mathfrak{g}(x)}{\mathfrak{g}_{i,j}(x)}  \mbox{ and }
a_{i,j,u}(\beta) = -\gamma_u^i \left(y_{i,j}(\beta)\right)^{-1},
\end{equation}
for each $\beta\in E_{i,j}$. We proceed to construct the following $\ell\times \ell$ upper triangular matrix over $\Ff_q[x]$
\begin{equation}\label{Design1}
\mathbf{G}^{\prime}(x)=\begin{pmatrix}
1&0&\cdots&0&h_{1}(x)\\
0&1&\cdots&0&h_{2}(x)\\
\vdots& \vdots &  \ddots & \vdots & \vdots\\
0&0&\cdots&1&h_{\ell-1}(x)\\
0&0&\cdots&0&\mathfrak{g}(x)
\end{pmatrix}
\end{equation}
to build quasi-cyclic codes with designed minimum distance.

\begin{thm}\label{QC-Design}
Let $n$ and $\ell$ be positive integers with $\gcd(n,q)=1$ and $\ell<q$. Let $\cc$ be an $[\ell n,k,d]_q$ quasi-cyclic code generated by $\mathbf{G}^{\prime}(x)$ as in \eqref{Design1}. For each $i\in[\ell]$, let $\mathfrak{C}_i$ be the cyclic code of length $n$ over $\Ff_q$ with zero set $\bigcup_{j=1}^{m_i} E_{i,j}$. If $(L_i,d_{L_i})\in D(\mathfrak{C}_i)$ is a defining set bound given in Definition \ref{def1}, then
\begin{equation}\label{QCD}
d\geq \min\{d_{L_1},2 \, d_{L_2},\cdots,\ell \, d_{L_\ell}\}.
\end{equation}
\end{thm}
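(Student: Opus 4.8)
The plan is to apply the improved spectral bound in Theorem~\ref{ISB} with $s=\ell$, taking the defining set bounds $(L_1,d_{L_1}),\ldots,(L_\ell,d_{L_\ell})$ from the cyclic codes $\mathfrak{C}_1,\ldots,\mathfrak{C}_\ell$ exactly as in the statement, and then identifying the eigencodes $\mathbb{C}_i$ of $\bigcap_{\beta\in L_i}V_\beta$ explicitly. The key observation is that, because $\mathbf{G}^{\prime}(x)$ in \eqref{Design1} is upper triangular with $\ell-1$ leading $1$'s and only the single polynomial $\mathfrak{g}(x)$ on the diagonal, every eigenvalue of $\cc$ is a root of $\mathfrak{g}(x)$, so $E=\bigcup_{i,j}E_{i,j}$ and $L_i\subseteq E$ for each $i$, making Theorem~\ref{ISB} applicable once I sort the $d_{L_i}$ in decreasing order (the reordering does not affect the final minimum since the bound \eqref{NewSpec} is symmetric enough under relabeling, as Remark~\ref{remark1} makes clear). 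The hard part will be step two: computing each eigenspace $V_\beta$ for $\beta\in E_{i,j}$ from the defining relations $\mathbf{G}^{\prime}(\beta)\mathbf{u}^\top=\mathbf{0}^\top$, and showing that the coefficients $a_{i,j,u}(\beta)=-\gamma_u^i\,(y_{i,j}(\beta))^{-1}$ were rigged precisely so that $V_\beta$ is the line spanned by $(1,\gamma_{\beta}, \gamma_\beta^2,\ldots,\gamma_\beta^{\ell-1})$ for an appropriate scalar $\gamma_\beta\in\Ff_q$ depending only on which block $i$ the root $\beta$ belongs to.

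Concretely, I would first fix $\beta\in E_{i,j}$, so $\mathfrak{g}_{i,j}(\beta)=0$, hence $\mathfrak{g}(\beta)=0$ and, for $(i',j')\ne(i,j)$, the polynomial $y_{i',j'}(x)=\mathfrak{g}(x)/\mathfrak{g}_{i',j'}(x)$ still vanishes at $\beta$ (since $\mathfrak{g}_{i,j}$ divides it), while $y_{i,j}(\beta)\ne 0$ because the $\mathfrak{g}_{a,b}$ are distinct irreducibles and $\gcd(n,q)=1$ guarantees $\mathfrak{g}(x)$ is squarefree. Therefore $h_u(\beta)=a_{i,j,u}(\beta)\,y_{i,j}(\beta)=-\gamma_u^i$ for $u\in[\ell-1]$ (with the convention $\gamma_\ell$ not appearing), so evaluating \eqref{Design1} at $\beta$ gives the matrix with $1$'s on the diagonal, $-\gamma_1^i,\ldots,-\gamma_{\ell-1}^i$ in the last column (rows $1$ through $\ell-1$), and a zero in the bottom-right. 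Solving $\mathbf{G}^{\prime}(\beta)\mathbf{u}^\top=\mathbf{0}$ then forces $u_t=\gamma_t^i\,u_\ell$ for $t\in[\ell-1]$ with $u_\ell$ free (here I index $\gamma_\ell:=1$ so that $u_\ell=\gamma_\ell^i u_\ell$ holds trivially), so $V_\beta$ is one-dimensional, spanned by $\mathbf{w}^{(i)}:=(\gamma_1^i,\gamma_2^i,\ldots,\gamma_{\ell-1}^i,1)$, and crucially this spanning vector depends only on $i$, not on $j$ or the particular root $\beta$.

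Consequently $\bigcap_{\beta\in L_i}V_\beta=V_\beta=\Ff_{q^r}\,\mathbf{w}^{(i)}$ whenever $\emptyset\ne L_i\subseteq\bigcup_j E_{i,j}$, and the eigencode is $\mathbb{C}_i=\{\mathbf{v}\in\Ff_q^\ell : \mathbf{v}\cdot(\mathbf{w}^{(i)})^\top=0\}$, an $[\ell,\ell-1]_q$ code. The final step is the intersection count: $\bigcap_{i=1}^{u}\mathbb{C}_i$ is the code orthogonal to $\mathrm{Span}_{\Ff_q}\{\mathbf{w}^{(1)},\ldots,\mathbf{w}^{(u)}\}$, and since the $\gamma_i$ (together with $\gamma_\ell=1$) are distinct elements of $\Ff_q$, the vectors $\mathbf{w}^{(1)},\ldots,\mathbf{w}^{(u)}$ form a (transposed) Vandermonde-type system and are $\Ff_q$-linearly independent for $u\le\ell$; this is where $\ell<q$ is used, to have enough distinct nonzero field elements. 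Hence $\bigcap_{i=1}^{u}\mathbb{C}_i$ has dimension $\ell-u$, and — again by the Vandermonde structure, any $u$ of the columns/coordinates being an invertible submatrix — it is an MDS code with minimum distance $d\!\left(\bigcap_{i=1}^{u}\mathbb{C}_i\right)=u+1$ for $1\le u\le\ell-1$, while $\bigcap_{i=1}^{\ell}\mathbb{C}_i=\{\mathbf{0}\}$ so $d\!\left(\bigcap_{i=1}^{\ell}\mathbb{C}_i\right)=\infty$. Substituting $d(\mathbb{C}_i\cap\cdots)=i+1$ into \eqref{NewSpec} (after matching the sorted order $d_{L_1}\ge\cdots\ge d_{L_\ell}$ to the blocks, which only permutes the labels) yields $d\ge\min\{d_{L_1},\,2\,d_{L_2},\,3\,d_{L_3},\ldots,\ell\,d_{L_\ell},\,\infty\}=\min\{d_{L_1},2\,d_{L_2},\ldots,\ell\,d_{L_\ell}\}$, which is \eqref{QCD}. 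The only genuine subtlety beyond bookkeeping is verifying the $\ell<q$ hypothesis really does the work of keeping $\mathbf{w}^{(1)},\ldots,\mathbf{w}^{(\ell)}$ independent and the intersected eigencodes MDS — everything else is a direct computation with the cooked-up coefficients $a_{i,j,u}$.
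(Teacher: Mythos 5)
Your route is the same as the paper's: identify the eigenspaces from the rigged coefficients $a_{i,j,u}$, read off the eigencodes, and feed them into Theorem~\ref{ISB}. Your central computation is exactly the paper's: for $\beta\in E_{i,j}$ one has $y_{i',j'}(\beta)=0$ for $(i',j')\neq(i,j)$ and $y_{i,j}(\beta)\neq 0$ (squarefreeness of $x^n-1$), so $h_u(\beta)=-\gamma_u^i$, hence $V_\beta=\mathrm{Span}\{(\gamma_1^i,\ldots,\gamma_{\ell-1}^i,1)\}$ depends only on the block $i$; consequently $\mathbb{C}_i$ has parity-check vector $\mathbf{w}^{(i)}=(\gamma_1^i,\ldots,\gamma_{\ell-1}^i,1)$, and for the \emph{consecutive} prefix $\{1,\ldots,t\}$ the $t\times t$ minors of the stacked parity-check matrix are $\bigl(\prod_b\gamma_{j_b}\bigr)$ times an ordinary Vandermonde determinant, giving $d\bigl(\bigcap_{i=1}^{t}\mathbb{C}_i\bigr)=t+1$ and $\bigcap_{i=1}^{\ell}\mathbb{C}_i=\{\mathbf{0}\}$. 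All of this is correct and is precisely what the paper does.

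The one step that would fail as written is your reordering. After sorting so that $d_{L_{\sigma(1)}}\geq\cdots\geq d_{L_{\sigma(\ell)}}$, the prefixes become arbitrary subsets $S=\{\sigma(1),\ldots,\sigma(u)\}\subseteq[\ell]$, and your claim that "any $u$ of the columns" of the corresponding generalized Vandermonde matrix form an invertible submatrix is false over finite fields: take $q=5$, $\gamma_1=2$, $\gamma_2=3$, $\gamma_3:=1$ and $S=\{1,3\}$; then $\mathbf{w}^{(1)}=(2,3,1)$, $\mathbf{w}^{(3)}=(3,2,1)$, the minor on the first two coordinates is $2\cdot 2-3\cdot 3=0$ in $\Ff_5$, and indeed $(1,1,0)\in\mathbb{C}_1\cap\mathbb{C}_3$, so $d(\mathbb{C}_1\cap\mathbb{C}_3)=2<3$. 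So "the intersected eigencodes are MDS" is only guaranteed when the exponent set is $\{1,\ldots,u\}$. The clean fix is simply not to sort: apply Theorem~\ref{ISB} with the blocks in their given order $1,\ldots,\ell$. The "without loss of generality" ordering in Theorem~\ref{ISB} is not needed for the \emph{validity} of \eqref{NewSpec} (only for its tightness): in its proof, the Case-2 quantity $d_u$ is a maximum over all admissible pairs $(X,j)$ and in particular dominates the choice $X=\{1,\ldots,u\}$, $j=u+1$, and the Case-3 minimum likewise dominates the prefix terms since $[s]\setminus\{j\}\supseteq\{1,\ldots,j-1\}$. With the natural order, only consecutive prefixes occur, your Vandermonde argument applies verbatim, and \eqref{QCD} follows; this is exactly the paper's argument.
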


\begin{proof}
Letting $x^n-1 = f(x) \, \mathfrak{g}(x)$, we devise an $\ell\times \ell$ upper triangular matrix over $\Ff_q[x]$
\begin{equation}
\mathbf{H}(x)=\begin{pmatrix}
x^n-1&0&\cdots&0&-f(x) \, h_{1}(x)\\
0&x^n-1&\cdots&0&-f(x) \, h_{2}(x)\\
\vdots& \vdots &  \ddots & \vdots & \vdots\\
0&0&\cdots&x^n-1&-f(x) \, h_{\ell-1}(x)\\
0&0&\cdots&0&f(x)
\end{pmatrix}.
\end{equation}
It is clear that $\mathbf{H}(x) \, \mathbf{G}^{\prime}(x) = (x^n-1) \, \mathbf{I}_\ell$, which implies that $\mathbf{G}^{\prime}(x)$ generates a quasi-cyclic code $\cc$ of length $\ell n$ and eigenvalue set $E$. Theorem \ref{ISB} and Remark \ref{gen-matrix-remark} require us to determine the eigencode $\mathbb{C}_i$ of $\bigcap_{\beta\in L_i}V_{\beta}$ for each $i\in[\ell]$. We note that the eigenspace $V_\beta$ is
\[
V_{\beta}=\left\{\mathbf{u}\in\Ff_{q^r}^\ell:\mathbf{G}^{\prime}(\beta)\mathbf{u}^{\top}=\mathbf{0}^{\top}\right\}.
\]
We verify that $\bigcap_{\beta\in L_i}V_{\beta}$ is generated by the vector $(\gamma_1^i,\cdots,\gamma_{\ell-1}^i,1)$ for each $i\in[\ell]$. Hence, the correponding eigencode $\mathbb{C}_i$ of $\bigcap_{\beta\in L_i}V_{\beta}$ has a parity-check matrix $(\gamma_1^i,\cdots,\gamma_{\ell-1}^i,1)$. For each $t\in[\ell]$, the code $\bigcap_{i=1}^t\mathbb{C}_i$ has a parity-check matrix
\[
\begin{pmatrix}
\gamma_1&\gamma_2&\cdots&\gamma_{\ell-1}&1\\
\gamma_1^2&\gamma_2^2&\cdots&\gamma_{\ell-1}^2&1\\
\vdots& \vdots &  \ddots & \vdots\\
\gamma_1^t&\gamma_2^t&\cdots&\gamma_{\ell-1}^t&1\\
\end{pmatrix}.
\]
The minimum distance of $\bigcap_{i=1}^t\mathbb{C}_i$ is clearly $t+1$ for $t\in[\ell-1]$ and $\bigcap_{i=1}^\ell\mathbb{C}_i=\{\mathbf{0}\}$. Theorem \ref{ISB} allows us to conclude that $d(\cc) \geq \min\{d_{L_1}, 2 \, d_{L_2}, \cdots,\ell \, d_{L_\ell}\}$.
\end{proof}

\begin{example}\label{QC-designed}
Let $q=3$, $n=5$, $\ell=2<q$, and $\gamma=-1$. Considering $\mathfrak{g}(x)=x^5-1$, whose factorization over $\F_3$ is
\[
\mathfrak{g}(x)=x^5-1=\mathfrak{g}_{1}(x) \, \mathfrak{g}_{2}(x) = (x-1) \, (x^4+x^3+x^2+x+1),
\]
we infer that $m_1=m_2=1$. Let $\cc$ be the ternary quasi-cyclic code of length $10$ and index $2$ generated by the rows of
\[
\mathbf{G}^{\prime}(x)=\begin{pmatrix}
1 & h(x)\\
0 & \mathfrak{g}(x)
\end{pmatrix}.
\]
The eigenvalues of $\cc$ are $E=\Delta=\{\alpha,\alpha^2,\alpha^3,\alpha^4,1\}$, where $\alpha\in\F_{81}$ is a primitive $5^{\rm th}$ root of unity. Hence, $E=\bigcup_{i=1}^2 E_{i,1}$ such that $E_{1,1} = \{1\}$ and $E_{2,1} = \{\alpha,\alpha^2,\alpha^3,\alpha^4\}$.

We now construct $h(x)$. By (\ref{Polynomial-1}), we have
\begin{equation}\label{poly-h}
h(x) = \sum_{i=1}^2  a_{i}(x) \, \frac{\mathfrak{g}(x)}{\mathfrak{g}_{i}(x)}
= a_{1}(x) \, \mathfrak{g}_2(x) + a_{2}(x) \, \mathfrak{g}_1(x)
= a_{1}(x) \, (x^4+x^3+x^2+x+1)+ a_{2}(x) \, (x-1).
\end{equation}
The computation of the polynomials $a_{i}(x)$ goes as follows.
\begin{align*}
E_{1,1} = \{1\} & \implies a_{1}(1)=-(-1) \, (-1)^{-1} = -1 \mbox{ and}\\
E_{2,1} = \{\alpha,\alpha^2,\alpha^3,\alpha^4\} & \implies  a_{2}(\alpha^j)=-(-1)^2 \, (\alpha^j-1)^{-1} = -(\alpha^j-1)^{-1},  \mbox{\ for\ all\ } 1\leq j \leq 4.
\end{align*}
We conclude that $a_1(x)=-x$ and, by interpolation, obtain $a_2(x)=-x^3+x^2-1$. Substituting into (\ref{poly-h}) yields
\[
h(x)=-x^5+x^4+x^3+x^2+x+1,
\]
which implies
\[
\mathbf{G}^{\prime}(x)=\begin{pmatrix}
1 & -x^5+x^4+x^3+x^2+x+1\\
0 & x^5-1
\end{pmatrix}.
\]
We apply two defining set bounds on $E$ to estimate the minimum distance of $\cc$. First, we consider $(L_1,d_{L_1})=(\{\alpha,\alpha^2,\alpha^3\}, 4)$, obtained by the BCH bound. The corresponding eigencode is $\Cc_1$, which is generated by $\begin{pmatrix} 1 & -1 \end{pmatrix}$ over $\F_3$ with $d(\Cc_1)=2$. Second, we take $(L_2,d_{L_2})=(\{1\}, 2)$, again by using the BCH bound. The associated eigencode is $\Cc_2$, which is generated by $\begin{pmatrix} 1 & 1 \end{pmatrix}$ over $\F_3$ with $d(\Cc_2)=2$. We observe that $\Cc_1$ and $\Cc_2$ intersect trivially, yielding the estimate $\min\{d_{L_1},2d_{L_2}\}=4$. In fact, $\cc$ is a $[10,5,4]_3$ code and the improved spectral bound is sharp. \hfill $\blacksquare$
\end{example}

\begin{example}\label{Example1:QC-BCH}
When $n$ divides $q-1$, the eigenvalue set $E$ of the quasi-cyclic code in Theorem \ref{QC-Design} belongs to $\Ff_q$. If we take the consecutive set $L_i$ with $|L_i| = \left\lceil\frac{\delta-1}{i}\right\rceil$ for each $i\in[\ell]$ and $\sum_{i=1}^\ell|L_i|\leq n$ by using Theorem \ref{QC-Design}, then we get an $[\ell n, \ell n-\sum_{i=1}^\ell|L_i|, \geq\delta]_q$ quasi-cyclic code. Taking some $\ell$, $q$, and the consecutive sets $L_i$, we build some quasi-cyclic codes of length $\ell(q-1)$ with small Singleton defect. Table \ref{table-parameter} lists their parameters. The case $\ell=1$ corresponds to a BCH code.

\begin{table*}[ht!]
\caption{Parameters of Quasi-Cyclic Codes with Singleton defect $\leq 3$ .}
\label{table-parameter}
\renewcommand{\arraystretch}{1.2}
\centering
\begin{tabular}{cccccc}
\toprule
No. & $\ell$  & $q$ & $(|L_1|,|L_2|,|L_3|,|L_4|)$ & $[n,k,d]_q$ & Singleton defect\\
\midrule
$1$ & $2$ & $\geq 4$ & $(2,1,0,0)$ & $[2(q-1),2(q-1)-3,\geq 3]_q$ & $\leq 1$ \\
$2$ & $2$ & $\geq 5$ & $(3,1,0,0)$ & $[2(q-1),2(q-1)-4,\geq 4]_q$ & $\leq 1$ \\
$3$ & $2$ & $\geq 7$ & $(4,2,0,0)$ & $[2(q-1),2(q-1)-6,\geq 5]_q$ & $\leq 2$ \\
$4$ & $2$ & $\geq 9$ & $(5,2,0,0)$ & $[2(q-1),2(q-1)-7,\geq 6]_q$ & $\leq 2$ \\
$5$ & $2$ & $\geq 11$ & $(6,3,0,0)$ & $[2(q-1),2(q-1)-9,\geq 7]_q$ & $\leq 3$ \\
$6$ & $2$ & $\geq 11$ & $(7,3,0,0)$ & $[2(q-1),2(q-1)-10,\geq 8]_q$ & $\leq 3$ \\
\hline
$7$ & $3$ & $\geq 5$ & $(2,1,1,0)$ & $[3(q-1),3(q-1)-4,\geq 3]_q$ & $\leq 2$ \\
$8$ & $3$ & $\geq 7$ & $(3,1,1,0)$ & $[3(q-1),3(q-1)-5,\geq 4]_q$ & $\leq 2$ \\
$9$ & $3$ & $\geq 9$ & $(4,2,1,0)$ & $[3(q-1),3(q-1)-7,\geq 5]_q$ & $\leq 3$ \\
$10$ & $3$ & $\geq 9$ & $(5,2,1,0)$ & $[3(q-1),3(q-1)-8,\geq 6]_q$ & $\leq 3$ \\
\hline
$11$ & $4$ & $\geq 7$ & $(2,1,1,1)$ & $[4(q-1),4(q-1)-5,\geq 3]_q$ & $\leq 3$ \\
$12$ & $4$ & $\geq 7$ & $(3,1,1,1)$ & $[4(q-1),4(q-1)-6,\geq 4]_q$ & $\leq 3$ \\
\bottomrule
\end{tabular}
\end{table*}

Each cyclotomic coset from the eigenvalue set of our quasi-cyclic code has only one element, whereas the zero set of the $q$-ary BCH code with length $\ell n$ and designed minimum distance $\delta$ contains some cyclotomic cosets with at least two elements whenever $\ell n $ does not divide $(q-1)$. With this advantage, our coding scheme is easier to implement than that of the BCH codes. Given $11\leq q \leq 31$, $\ell=3$, and $n=q-1$, the dimensions of our quasi-cyclic codes are greater than those of the BCH codes, as shown in  Table \ref{table:QC-BCH}.

\begin{table}[ht!]
\caption{The dimensions and the actual minimum distances of $q$-ary BCH and quasi-cyclic codes of length $3(q-1)$ with designed minimum distance $6$.}
\label{table:QC-BCH}
\renewcommand{\arraystretch}{1.2}
\centering
\begin{tabular}{cc|cc|cc}
\toprule
$q$ & length & $dim$(BCH) & $dim$(QC) & $d$(BCH) & $d$(QC) \\
\midrule
$11$ & $30$ & $21$ & $22$ & $6$ & $6$\\
$13$ & $36$ & $23$ & $28$ & $6$ & $6$\\
$16$ & $45$ & $32$ & $37$ & $6$ & $6$\\
$17$ & $48$ & $39$ & $40$ & $6$ & $6$\\
$19$ & $54$ & $41$ & $46$ & $6$ & $6$\\
$23$ & $66$ & $57$ & $58$ & $6$ & $6$\\
$25$ & $72$ & $59$ & $64$ & $6$ & $6$\\
$29$ & $84$ & $75$ & $76$ & $6$ & $6$\\
$31$ & $90$ & $77$ & $82$ & $6$ & $6$\\
\bottomrule
\end{tabular}
\end{table}
\hfill $\blacksquare$
\end{example}

As Example \ref{Example1:QC-BCH} shows, the quasi-cyclic codes generated by Theorem \ref{QC-Design} perform better than the BCH codes under the given conditions. Next, we consider the performance of the improved spectral bound against the Jensen bound on the codes in Theorem \ref{QC-Design}.

We now give the Jensen bound for the quasi-cyclic code $\cc$ in Theorem \ref{QC-Design}. By definition, as a submodule of $R^\ell$ which is generated by the rows of the polynomial matrix $\mathbf{G}^{\prime}(x)$ in \eqref{Design1}, the code $\cc$ is spanned by
\begin{equation}\label{submodule-C}
\{(1,0,\cdots,0,h_1(x)),\cdots,(0,0,\cdots,1,h_{\ell}(x)),(0,0,\cdots,0,g(x))\}.
\end{equation}
Let $x^n-1=\mathfrak{g}(x)\prod_{b=1}^t f_b(x)$ with $f_b(x)\in\Ff_q[x]$ being an irreducible polynomial for each $b\in[t]$. Let $\mathbb{E}_{i,j}=\Ff_q[x]/\langle \mathfrak{g}_{i,j}(x)\rangle$ and $\mathbb{E}_{b}=\Ff_q[x]/\langle f_{b}(x)\rangle$ for each $i\in[\ell]$, $j\in[m_i]$, and $b\in[t]$. By \eqref{s22eq2} and \eqref{Decomposition}, we have
\[
R^\ell\cong\oplus_{i=1}^\ell\oplus_{j=1}^{m_i}\mathbb{E}_{i,j}^\ell \oplus_{b=1}^t \mathbb{E}_b^\ell \mbox{ and }
\cc\cong\oplus_{i=1}^\ell\oplus_{j=1}^{m_i}\cc_{i,j} \oplus_{b=1}^t \cc_b,
\]
where $\cc_{i,j}$ and $\cc_b$ are the constituent codes of $\cc$. Using \eqref{constituent}, \eqref{Polynomial-1}, and \eqref{submodule-C}, for each $i\in[\ell]$, $j\in[m_i]$, and $b\in[t]$, we obtain
\begin{align*}
\cc_{i,j} & = \left\{\left(a_1,\cdots,a_{\ell-1}, \sum_{z=1}^{\ell-1}a_z \gamma_z^i\right): a_1,\cdots,a_{\ell-1} \in \mathbb{E}_{i,j}\right\} \mbox{ and} \\
\cc_b &= \left\{\left(a_1,\cdots,a_{\ell}\right) : a_1, \cdots, a_{\ell} \in \mathbb{E}_{b}\right\}.
\end{align*}
It is easy to check that the respective minimum distances of $\cc_{i,j}$ and $\cc_b$ are $d(\cc_{i,j})=2$ and $d(\cc_{b})=1$ for each $i\in[\ell]$, $j\in[m_i]$, and $b\in[t]$. Let $\mathcal{D}_{i,j}=\langle\theta_{i,j}\rangle$ and $\mathcal{D}_{b}=\langle\theta_{b}\rangle$ be the minimal cyclic codes that are isomorphic to the respective fields $\mathbb{E}_{i,j}$ and $\mathbb{E}_{b}$ for each $i\in[\ell]$, $j\in[m_i]$, and $b\in[t]$. Since all of the constituent codes of $\cc$ are nontrivial, it follows from Theorem \ref{thmJensen} that the Jensen bound for $\cc$ takes into consideration
\begin{equation}\label{Jensen estimate}
d(\cc_{i_1,j_1}) \, d \left(\oplus_{i=1}^\ell \, \oplus_{j=1}^{m_i}\langle\theta_{i,j}\rangle \, \oplus_{b=1}^t \langle\theta_{b}\rangle \right) \mbox{ or }  d(\cc_{b}) \, d \left(\oplus_{i=1}^\ell \, \oplus_{j=1}^{m_i}\langle\theta_{i,j}\rangle \, \oplus_{b=1}^t \langle\theta_{b}\rangle \right).
\end{equation}
Thanks to $d \left(\oplus_{i=1}^\ell \oplus_{j=1}^{m_i}\langle\theta_{i,j}\rangle
\oplus_{b=1}^t \langle\theta_{b}\rangle \right) = 1$, we deduce that the quasi-cyclic code $\cc$ in Theorem \ref{QC-Design} admits the Jensen bound $d_J\leq 2$. We have thus proved the following result.

\begin{prop}\label{prop5.4}
Let $\cc$ be the quasi-cyclic code defined in Theorem \ref{QC-Design}. If $d_{Spec}$ and $d_J$ are the improved spectral bound given by Theorem \ref{QC-Design} and the Jensen bound of $\cc$, respectively, then $d_{Spec}\geq d_J$.
\end{prop}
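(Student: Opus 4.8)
The plan is to compare the two lower bounds directly by showing that the Jensen estimate is capped at $2$ while the improved spectral estimate is always at least $2$. The preceding discussion already does most of the work: it establishes that every constituent of $\cc$ is nontrivial, that $d(\cc_{i,j}) = 2$ and $d(\cc_b) = 1$, and crucially that $d\bigl(\oplus_{i=1}^\ell \oplus_{j=1}^{m_i}\langle\theta_{i,j}\rangle \oplus_{b=1}^t \langle\theta_b\rangle\bigr) = 1$. Since the inner-code minimum distance factor in every term of the Jensen minimum \eqref{Jensenbd} is a distance of a direct sum of minimal cyclic codes that contains $\langle\theta_b\rangle$ as a summand, and the full direct sum already has distance $1$, each such factor equals $1$. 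Therefore every term of the Jensen bound is $d(\cc_{i_1,j_1}) \cdot 1 \leq 2$ or $d(\cc_b) \cdot 1 = 1$, so $d_J \leq 2$.

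Next I would lower-bound $d_{Spec}$. From Theorem \ref{QC-Design}, the improved spectral bound is $d_{Spec} = \min\{d_{L_1}, 2\,d_{L_2}, \cdots, \ell\,d_{L_\ell}\}$ for any valid choice of defining set bounds $(L_i, d_{L_i}) \in D(\mathfrak{C}_i)$. Since each $\mathfrak{C}_i$ is a cyclic code whose zero set $\bigcup_{j} E_{i,j}$ is a proper (indeed, this needs a word of care — one should note the zero set is not all of $\Delta$, so $\mathfrak{C}_i \neq \{\mathbf{0}\}$, or at worst observe the trivial code has $d = \infty$) set, one can always pick $d_{L_i} \geq 1$; for instance taking any singleton $L_i$ inside the zero set gives the BCH bound value $d_{L_i} \geq 2$. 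Hence each term $i\,d_{L_i} \geq i \cdot 1 \geq 1$, and more sharply $d_{L_1} \geq 2$ and $i\,d_{L_i} \geq 2$ for $i \geq 2$, so $d_{Spec} \geq 2 \geq d_J$, which is the claim.

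The only genuinely delicate point is ensuring that the defining set bounds can be chosen so that each $d_{L_i} \geq 1$ in a non-vacuous way and that the quantity $d_{Spec}$ produced by Theorem \ref{QC-Design} is meaningfully $\geq 2$ rather than merely $\geq 1$; this is handled by invoking the BCH bound on any single eigenvalue in the zero set of $\mathfrak{C}_i$, which forces $d_{L_i} \geq 2$. Everything else is bookkeeping: assembling the chain $d_{Spec} \geq 2$ and $d_J \leq 2$ and concluding $d_{Spec} \geq d_J$. I expect no substantive obstacle — the real content was already discharged in the computation of the constituents and the observation $d\bigl(\oplus \langle\theta_{i,j}\rangle \oplus \langle\theta_b\rangle\bigr) = 1$ carried out in the paragraphs preceding the proposition — so the proof is short.
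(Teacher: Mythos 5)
Your proposal is correct and follows essentially the same route as the paper: the paper's proof consists exactly of the constituent computation giving $d_J \le 2$ (via the term of the Jensen minimum whose inner factor is the full direct sum, of distance $1$), with the complementary fact $d_{Spec} \ge 2$ left implicit, which you make explicit via the BCH bound on a singleton eigenvalue. One small slip worth fixing: it is not true that \emph{every} inner factor $d(\langle\theta_{i_1}\rangle\oplus\cdots\oplus\langle\theta_{i_z}\rangle)$ in the Jensen minimum equals $1$ (partial sums for $z<t$ generally have larger distance); only the final term uses the full direct sum, but since the Jensen bound is a minimum over all terms, that single term already forces $d_J\le 2$.
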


\begin{example}\label{jensen-beat}
We consider the $[10,5,4]_3$ quasi-cyclic code $\cc$ in Example \ref{QC-designed}, which is generated by
\[
\mathbf{G}^{\prime}(x)=\begin{pmatrix}
1 & -x^5+x^4+x^3+x^2+x+1\\
0 & x^5-1
\end{pmatrix}.
\]
We recall that $x^5-1 = \mathfrak{g}_1(x) \, \mathfrak{g}_2(x) =(x-1) \, (x^4+x^3+x^2+x+1)$, with $\mathfrak{g}_{1} (1) = 0$ and $\mathfrak{g}_2(\alpha)=0$, where $\alpha$ is a primitive $5^{\rm th}$ root of unity in $\F_{81}$. Hence, $R\cong \F_3 \oplus \F_{81}$ and the respective constituents $\cc_1\subset \F_3^2$ and $\cc_2\subset \F_{81}^2$ are generated by
\[
G_1 = \begin{pmatrix} 1 & 1 \end{pmatrix} \mbox{ and }
G_2 = \begin{pmatrix} 1 & -1 \end{pmatrix}.
\]
Since $d(\cc_1) = d(\cc_2) = 2$, we can use (\ref{Jensen estimate}) to immediately conclude that $d_J=2$, whereas $d(\cc)=4$ was sharply estimated by the improved spectral bound. \hfill $\blacksquare$
\end{example}

\section{Connections to Locally Repairable Codes}\label{S6}

Coding schemes for large-scale distributed storage systems have been developed to catch up with the exponential growth of data. Examples of distributed storage systems in practice include Google's {\tt Bigtable}, Microsoft Azure's {\tt Blob Storage}, and Amazon's {\tt S3}. Erasure-correcting codes have been widely utilized in such systems to achieve two main objectives, namely to store data efficiently and to repair errors rapidly.

An early popular approach is to use an $[m,k,d]_q$ MDS code. It repairs failures in up to $d-1$ nodes by accessing the $k$ other nodes. To quickly repair failures in a small number of nodes, the MDS coding scheme requires the utilization of \emph{all} $k$ surviving nodes. The incurred cost is considered too high, since failures in a single node is the most frequent and the probability of simultaneous failures in two or more nodes decreases significantly as a function of the number of failed nodes.

An alternative coding scheme, first proposed in \cite{Gopalan2012}, offers a lower cost. The scheme consists of locally repairable codes (LRCs) where accessing $\rho \ll k$ surviving nodes is sufficient to repair a single node failure. To repair multiple node failures, Prakash {\it et al.} in \cite{Prakash2012} introduced the concept of linear codes with $(\rho,\delta)$-locality. Such codes allow us to repair each single node failure by accessing only $\rho$ surviving nodes in the presence of additional $\delta-2$ node failures, with $\delta \geq 2$.
\begin{defn}\label{LRC}
An $[m,k,d]_q$ linear code $\cc$ has $(\rho,\delta)$-locality if, for each symbol $c_i$ with $1\leq i\leq m$, there exists a subset $S_i \subseteq [m] = \{1,\ldots,m\}$ containing $i$ such that
\begin{enumerate}
\item $|S_i| \leq \rho + \delta-1$ and
\item the local code $\cc|_{S_i}$ has minimum distance ${\rm d}(\cc|_{S_i}) \geq \delta$.
\end{enumerate}
Here the local code $\cc|_{S_i}$ is constructed by removing the entries indexed by the set $[m] \setminus S_i$ in each codeword of $\cc$. The code $\cc$ is referred to as a $(\rho,\delta)$-LRC.
\end{defn}

Let $\lceil \cdot \rceil$ be the ceiling function. The Singleton-type bound for an $[m,k,d]_q$ linear code with locality $(\rho,\delta)$ was shown in \cite{Prakash2012} to be
\begin{equation}\label{LRCbound1}
d \leq m-k+1-\left(\left\lceil\frac{k}{\rho}\right\rceil-1\right) (\delta-1).
\end{equation}
The above bound is not tight for the case where the local codes are non-MDS. Subsequently, Grezet \textit{et al.} in \cite{Grezet2019} proved another bound that covers both the MDS and non-MDS local codes. It reads
\begin{equation}\label{LRCbound2}
d \leq m- \left\lceil\frac{k}{\kappa}\right\rceil\mathcal{G}(\kappa,\delta)+\mathcal{G}\left(\left\lceil\frac{k}{\kappa}\right\rceil\kappa-k+1,\delta\right),
\end{equation}
where $\kappa$ is the upper bound on the dimension of the local codes. Let $k_{\rm opt}^{(q)}(m,d)$ stand for the largest possible dimension of a linear code with length $m$ and minimum distance $d$. Let $\mathbb{Z}_{+}$ be the set whose elements are the positive integers and $0$. Taking the alphabet size into consideration, we have from \cite{Grezet2019} an alphabet-dependent bound that states
\begin{equation}\label{LRCbound3}
k\leq\min_{z\in \mathbb{Z}_{+}}\left\{z+k_{\rm opt}^{(q)}\left(m-(x+1) \, \mathcal{G}(\kappa,\delta) + \mathcal{G}(\kappa-y,\delta), d \right)\right\},
\end{equation}
where $\kappa$ is the upper bound on the dimension of the local codes, and $x,y \in \mathbb{Z}_{+}$ are such that $z=x\kappa+y$, with $0\leq y<\kappa$.

Based on the concatenated structure of quasi-cyclic codes, we prove that quasi-cyclic codes have $(\rho,\delta)$-locality when some conditions are met.
\begin{thm}\label{QC-LRC}
Let $\gcd(n,q)=1$ and let the polynomial $x^n-1$ factor into $t$ irreducible polynomials $f_1(x),\cdots,f_t(x) \in \Ff_q[x]$. Let $\langle\theta_{i}\rangle$ be the cyclic code with check polynomial $f_i(x)$. Let $\cc$ be an $[\ell n,k,d]_q$ quasi-cyclic code of index $\ell$. Let $I$ be a proper subset of $[t]$. If $\cc$ has the concatenated structure $\bigoplus_{i\in I} \langle\theta_{i}\rangle\, \square \, \cc_{i}$, with $\cc_i$ being a nonzero constituent for each $i$, then $\cc$ has $(n-\delta+1,\delta)$-locality with
\[
\delta=d\left(\bigoplus_{i\in I}\langle\theta_{i}\rangle\right).
\]
\end{thm}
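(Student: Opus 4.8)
The plan is to use, as the recovery set for a given coordinate, the full block of $n$ coordinates forming the column of the $n\times\ell$ array \eqref{s2eq1} that contains it, and then to argue that restricting $\cc$ to such a block yields a subcode of the cyclic code $\bigoplus_{i\in I}\langle\theta_{i}\rangle$, whose minimum distance equals $\delta$ by hypothesis.

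Concretely, reading a codeword of $\cc$ as the array in \eqref{s2eq1} partitions the $\ell n$ coordinates into blocks $B_0,\dots,B_{\ell-1}$, where $B_j$ consists of the $n$ entries of column $j$. For a coordinate lying in column $j$ I would put $S=B_j$; then $|S|=n=(n-\delta+1)+(\delta-1)$, so condition (1) of Definition \ref{LRC} holds with $\rho=n-\delta+1$ (with equality, not strict inequality). For condition (2), I would invoke the hypothesis $\cc=\bigoplus_{i\in I}\langle\theta_{i}\rangle\,\square\,\cc_i$: by \eqref{s22eq3}, every codeword of $\cc$ is of the form $\sum_{i\in I}\bigl(\psi_i(c_{i,0}),\dots,\psi_i(c_{i,\ell-1})\bigr)$ with $(c_{i,0},\dots,c_{i,\ell-1})\in\cc_i$, so its restriction to column $j$ is $\sum_{i\in I}\psi_i(c_{i,j})$. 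Each $\psi_i(c_{i,j})$ lies in $\langle\theta_i\rangle$ by \eqref{eq:isomorphism}, and since the $\langle\theta_i\rangle$ for $i\in I$ are the minimal cyclic codes attached to the distinct irreducible factors $f_i(x)$, their sum is the internal direct sum $\bigoplus_{i\in I}\langle\theta_i\rangle$. Hence $\cc|_{B_j}\subseteq\bigoplus_{i\in I}\langle\theta_i\rangle$, so $d(\cc|_{B_j})\ge d\bigl(\bigoplus_{i\in I}\langle\theta_i\rangle\bigr)=\delta$, the trivial case $\cc|_{B_j}=\{\mathbf{0}\}$ being covered by the convention $d=\infty$. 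This verifies both conditions for every coordinate, establishing $(n-\delta+1,\delta)$-locality.

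The genuine content is the single observation that the column-restriction map carries $\cc$ into $\bigoplus_{i\in I}\langle\theta_i\rangle$; everything else is bookkeeping (linearity and well-definedness of that map are immediate from \eqref{s22eq3}, and a subcode inherits the ambient minimum-distance lower bound). I would also note that the hypothesis that $I$ is a \emph{proper} subset of $[t]$ is what forces $\bigoplus_{i\in I}\langle\theta_i\rangle$ to be a proper cyclic code of length $n$, hence $\delta\ge 2$, so that the resulting locality statement is non-vacuous.
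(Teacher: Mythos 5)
Your proof is correct and follows essentially the same route as the paper's: both take the recovery set for each symbol to be the column of the array containing it (a block of $n$ positions), use the concatenated structure \eqref{s22eq3} to see that the restriction of $\cc$ to that block lands in $\bigoplus_{i\in I}\langle\theta_i\rangle$, and read off the locality parameters from $\delta=d\bigl(\bigoplus_{i\in I}\langle\theta_i\rangle\bigr)$. If anything, your phrasing is slightly more careful than the paper's, which asserts that the local code \emph{equals} $\bigoplus_{i\in I}\langle\theta_i\rangle$ where only the containment you prove is needed.
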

\begin{proof}
We write $x^n-1=\prod_{i=1}^tf_i(x)$. Since $\langle\theta_{i}\rangle$ is the cyclic code with check polynomial $f_i(x)$, we obtain, by \eqref{s22eq3},
\[
\cc=\bigoplus_{i\in I}\langle\theta_{i}\rangle\, \square \, \cc_{i}
=\left\{\left(\bigoplus_{i\in I}\psi_{i}(c_{i,1}),\cdots,\bigoplus_{i\in I}\psi_{i}(c_{i,\ell})\right):(c_{i,1},\cdots,c_{i,\ell})\in\cc_{i},i\in I\right\}.
\]
We note that $\psi_{i}(c_{i,1})\in\langle\theta_{i}\rangle$ for each $i\in I$. If $S_j=\{(j-1)n+1,\cdots,jn\}$ for $j\in[\ell]$, then the local code $\cc|_{S_j}$ is $\bigoplus_{i\in I}\langle\theta_{i}\rangle$. Since $I$ is a proper subset of $[t]$, the cyclic code $\bigoplus_{i\in I}\langle\theta_{i}\rangle$ has at least one zero. Hence, the minimum distance of the local code $\cc|_{S_j}$ is greater than $1$. By Definition \ref{LRC}, the quasi-cyclic code $\cc$ has $(n-\delta+1,\delta)$-locality.
\end{proof}

By Theorem \ref{QC-LRC}, a quasi-cyclic code $\cc=\bigoplus_{i\in I} \langle\theta_{i}\rangle\, \square \, \cc_{i}$ is an LRC if $I$ is a proper subset of $[t]$. This sufficient condition comes from the concatenated structure of quasi-cyclic codes. Starting from a generator polynomial matrix $\mathbf{G}^{\prime}(x)$ of $\cc$, we provide another sufficient condition for $\cc$ to be an LRC. We note that $I$ is a proper subset of $[t]$ if and only if there exists at least one constituent $\cc_{i}=\{\mathbf{0}\}$. Let the polynomial $x^n-1=\prod_{i=1}^tf_i(x)$ factor into $t$ irreducible polynomials in $\Ff_q[x]$. We recall that $r$ is the smallest positive integer such that $q^r\equiv 1 \pmod n$ and $\alpha$ is a primitive $n^{\rm th}$ root of unity in $\Ff_{q^r}$. Since the roots of each $f_i(x)$ are powers of $\alpha$, we let $v_i$ be the smallest nonnegative integer such that $f_i(\alpha^{v_i})=0$. By \eqref{constituent}, the constituent $\cc_{i}$ is the $\mathbb{E}_i$-span of the rows of $\mathbf{G}^{\prime}(\alpha^{v_i})$ with $\mathbb{E}_i=\Ff_q[x]/\langle f_i(x)\rangle$ for each $i\in[t]$. Hence, $\cc_{i}=\{\mathbf{0}\}$ if and only if $\mathbf{G}^{\prime}(\alpha^{v_i})$ is the $\ell\times\ell$ zero matrix, which implies that $\mathbf{G}^{\prime}(x) = f_i(x) \, \mathbf{G_1}^{\prime}(x)$ for some polynomial matrix $\mathbf{G_1}^{\prime}(x)$ over $\Ff_q[x]$. If a quasi-cyclic code $\cc$ has a generator polynomial matrix $\mathbf{G}^{\prime}(x)=\prod_{j=1}^g f_{i_j}(x) \, \mathbf{G_1}^{\prime}(x)$, then $\cc$ has $(n-\delta+1,\delta)$-locality with
\begin{equation}\label{LRC-Spectral}
\delta=d\left(\bigoplus_{i\in [t]\setminus\{i_1,\cdots,i_g\}}\langle\theta_{i}\rangle\right).
\end{equation}
The following corollary to Theorem \ref{QC-LRC} reveals several bounds for quasi-cyclic codes.

\begin{cor}\label{QC-LRC-Bound}
Let $\gcd(n,q)=1$ and let the polynomial $x^n-1$ factor into $t$ irreducible polynomials in $\Ff_q[x]$. Let $\cc$ be an $[\ell n,k,d]$ quasi-cyclic code of index $\ell$ having concatenated structure $\bigoplus_{i\in I} \langle\theta_{i}\rangle\, \square \, \cc_{i}$, with $I\subseteq [t]$. If $\delta=d\left(\bigoplus_{i\in I}\langle\theta_{i}\rangle\right)$ and $\kappa =
k_{\rm opt}^{(q)}(n,\delta)$, then the parameters $\ell n$, $k$, and $d$ satisfy the following three statements.
\begin{enumerate}
\item $d \leq \ell n-k+1- \left(\left\lceil\frac{k}{n-\delta+1}\right\rceil-1\right)(\delta-1)$,
\item $d\leq \ell n- \left\lceil\frac{k}{\kappa}\right\rceil \,  \mathcal{G}(\kappa,\delta)+\mathcal{G}\left(\left\lceil\frac{k}{\kappa}\right\rceil\kappa-k+1,\delta\right)$,
\item $k\leq\min_{z\in \mathbb{Z}_{+}} \left\{z+k_{\rm opt}^{(q)} \left(\ell n-(x+1) \, \mathcal{G}(\kappa,\delta) + \mathcal{G}(\kappa-y,\delta),d\right)\right\}$, with $x,y\in \mathbb{Z}_{+}$ and $0\leq y<\kappa$ defining $z=x\kappa+y$.
\end{enumerate}
\end{cor}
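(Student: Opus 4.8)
The plan is to observe that the three inequalities are nothing more than the three known Singleton-type bounds for $(\rho,\delta)$-LRCs — namely \eqref{LRCbound1}, \eqref{LRCbound2}, and \eqref{LRCbound3} — instantiated at the specific locality parameters that Theorem \ref{QC-LRC} guarantees for the quasi-cyclic code $\cc$. So the real content of the proof is a bookkeeping step: identify $\rho$, $\delta$, $\kappa$, and $m$ correctly, then quote the respective bound.

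First I would invoke Theorem \ref{QC-LRC}: since $\cc$ has the concatenated structure $\bigoplus_{i\in I}\langle\theta_i\rangle\,\square\,\cc_i$ with each $\cc_i$ a nonzero constituent, and (implicitly, for the statement to be non-vacuous) $I\subsetneq[t]$, the code $\cc$ is an $[\ell n,k,d]_q$ code with $(n-\delta+1,\delta)$-locality, where $\delta=d\left(\bigoplus_{i\in I}\langle\theta_i\rangle\right)$. Thus in the notation of Definition \ref{LRC} we have $m=\ell n$, $\rho=n-\delta+1$, and the locality radius plus $\delta-1$ equals $(n-\delta+1)+(\delta-1)=n$, so each local code $\cc|_{S_j}$ has length (at most) $n$ and minimum distance $\geq\delta$. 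Substituting $m=\ell n$ and $\rho=n-\delta+1$ directly into \eqref{LRCbound1} yields statement $(1)$.

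For statements $(2)$ and $(3)$, the additional ingredient is the choice of $\kappa$, the upper bound on the dimension of the local codes. Each local code $\cc|_{S_j}$ is a code of length $n$ over $\Ff_q$ with minimum distance at least $\delta$, hence its dimension is at most $k_{\rm opt}^{(q)}(n,\delta)=\kappa$. Plugging $m=\ell n$ and this value of $\kappa$ into \eqref{LRCbound2} gives statement $(2)$, and plugging the same data into \eqref{LRCbound3} — with $x,y\in\mathbb{Z}_+$, $0\leq y<\kappa$, and $z=x\kappa+y$ exactly as in the hypothesis of that bound — gives statement $(3)$.

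The only point requiring a little care — and the main (mild) obstacle — is justifying that $\kappa=k_{\rm opt}^{(q)}(n,\delta)$ is a legitimate uniform upper bound on all local code dimensions: one must check that every repair set $S_j$ can be taken to be one of the coordinate blocks $\{(j-1)n+1,\dots,jn\}$ of size exactly $n$ used in the proof of Theorem \ref{QC-LRC}, so that each local code is a length-$n$ code of minimum distance $\geq\delta$, and that the bounds \eqref{LRCbound2}–\eqref{LRCbound3} from \cite{Grezet2019} are stated for an upper bound $\kappa$ on local dimensions (not an exact value), which is exactly how they are quoted above. Once this is in place, the three displayed inequalities follow immediately by specialization, with no further computation.
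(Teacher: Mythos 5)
Your proposal follows essentially the same route as the paper: invoke Theorem \ref{QC-LRC} to get $(n-\delta+1,\delta)$-locality with local codes supported on the length-$n$ coordinate blocks $S_j=\{(j-1)n+1,\dots,jn\}$, observe that $\kappa=k_{\rm opt}^{(q)}(n,\delta)$ bounds the local dimensions, and specialize \eqref{LRCbound1}, \eqref{LRCbound2}, \eqref{LRCbound3} with $m=\ell n$. That is exactly what the paper does for $I\subsetneq[t]$.

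The one point where you cut a corner is the case $I=[t]$. The corollary hypothesis is $I\subseteq[t]$, not $I\subsetneq[t]$, and the statement is not vacuous when $I=[t]$ — it just degenerates. Theorem \ref{QC-LRC} does not apply there (it explicitly assumes $I$ proper), so you cannot quote it; the paper instead notes that the same block argument gives $\delta=d\left(\bigoplus_{i=1}^{t}\langle\theta_i\rangle\right)=1$, under which the three inequalities still hold (statement $(1)$, for instance, collapses to the Singleton bound $d\leq \ell n-k+1$). Adding that one-line case analysis closes the gap; otherwise the argument is complete and matches the paper's.
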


\begin{proof}
Using \eqref{LRCbound1}, \eqref{LRCbound2}, \eqref{LRCbound3}, and Theorem \ref{QC-LRC}, the desired results follow whenever $\cc=\bigoplus_{i\in I} \langle\theta_{i}\rangle\, \square \, \cc_{i}$ and $I$ is a proper subset of $[t]$. If $\cc=\bigoplus_{i=1}^t \langle\theta_{i}\rangle\, \square \, \cc_{i}$, then the same method used in proving Theorem \ref{QC-LRC} yields $\delta = d\left(\bigoplus_{i=1}^t\langle\theta_{i}\rangle\right)=1$, ensuring that the above three statements hold.
\end{proof}

\begin{remark}
The value of $\delta$, evaluated based on the concatenated structure, is essential in the three bounds in Corollary \ref{QC-LRC-Bound}. Given an $[\ell n,k,d]_q$ random quasi-cyclic code $\cc$ of index $\ell$ with codewords in the form of \eqref{s2eq1}, each column in each $n\times \ell$ array of $\cc$ belongs to a local code of length $n$ and minimum distance $\delta$. That is to say, the value of $\delta$ follows from the minimum distance of the code generated by the first column in each $n\times \ell$ array of $\cc$.
\end{remark}

Luo, Ezerman, and Ling in \cite{Luo2022} constructed $q$-ary quasi-cyclic codes of index $\ell \leq q$ meeting the bound in Statement $(1)$ of Corollary \ref{QC-LRC-Bound}. Their approach utilizes \textit{matrix-product codes} of the form
\[
\left\{\left(\sum_{i=1}^M \cF_i \,
a_{i,1},\, \ldots, \, \sum_{i=1}^M \cF_i \, a_{i,N}\right) \, :  \cF_1\in\cc_1, \, \ldots, \, \cF_M \in \cc_M \vphantom{} \right\},
\]
where each codeword consists of $N$ component row vectors, each of length $n$. Matrix-product codes form a special class of quasi-cyclic codes of index $N$ if their short component codes $\cc_i$ are cyclic codes for all $i \in [M]$. This insight leads to an explicit construction of $q$-ary quasi-cyclic codes of index $q+1$ that achieve the bound in Statement $(1)$ of Corollary \ref{QC-LRC-Bound}.

We label the elements of $\Ff_q$ as $\gamma_1$, $\cdots$, $\gamma_q$ and let $v_1,\cdots,v_q$ be nonzero elements of $\Ff_q$. An extended \emph{generalized Reed-Solomon} (GRS) code is generated by
\[
\left(\begin{matrix}
v_1 & v_2 &\cdots & v_q&0&\\
v_1\gamma_1 & v_2\gamma_2 &\cdots & v_q\gamma_q&0&\\
\vdots& \vdots &  \ddots & \vdots&\vdots&\\
v_1\gamma_1^{k-1} & v_2\gamma_2^{k-1} &\cdots & v_q\gamma_q^{k-1}&1&\\
\end{matrix}\right).
\]
We know from \cite[Page. 209]{Ling2004} that such a code is MDS with parameters $[q+1,k,q-k+2]_q$.

\begin{thm}\label{LRC-C1}
Let $n$ divide $(q-1)$ and let $x^n-1=\prod_{i=0}^{n-1}(x-\beta^i)$, with $\beta$ being a primitive $n^{\rm th}$ root of unity in $\Ff_q$. Let $\langle\theta_{i}\rangle$ be the cyclic code of length $n$ with check polynomial $x-\beta^i$ for $i\in[n-\delta+1]$ and $2\leq\delta< n$. Let $\cc_1$ be an extended GRS code with parameters $[q+1,q-a+2,a]_q$ and let $\cc_i$ be a $[q+1,q-a+1,a+1]_q$ extended GRS code for $i \in \{2,\cdots,n-\delta+1\}$. If $a\leq \frac{\delta}{n-\delta}$, then the quasi-cyclic code $\cc=\bigoplus_{i=1}^{n-\delta+1}\langle\theta_{i}\rangle\, \square \, \cc_{i}$ has parameters $[(q+1)n,(q+1-a)(n-\delta+1)+1,a \, n]_q$ and achieves equality in both bounds in Statements $(1)$ and $(2)$ of Corollary \ref{QC-LRC-Bound}.
\end{thm}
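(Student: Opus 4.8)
The plan is to verify the claimed parameters of $\cc = \bigoplus_{i=1}^{n-\delta+1}\langle\theta_i\rangle\,\square\,\cc_i$ directly from the concatenated structure, and then to check that the length/dimension/distance triple makes both Singleton-type bounds in Corollary \ref{QC-LRC-Bound} tight. First I would compute the length: each concatenated code $\langle\theta_i\rangle\,\square\,\cc_i$ lives in $\Ff_q^{(q+1)n}$ since $\cc_i$ has length $q+1$ and each symbol of a $\cc_i$-codeword is encoded via $\psi_i$ into a block of length $n$; summing over $i$ keeps the ambient length $(q+1)n$. For the dimension, since $\langle\theta_i\rangle \cong \mathbb{E}_i = \Ff_q$ (the check polynomial $x-\beta^i$ is linear because $n\mid q-1$), the concatenation $\langle\theta_i\rangle\,\square\,\cc_i$ has $\Ff_q$-dimension equal to $\dim_{\Ff_q}\cc_i$, namely $q-a+2$ for $i=1$ and $q-a+1$ for $i\ge 2$. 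The direct sum is genuinely a direct sum of subspaces because the idempotents $\theta_i$ are orthogonal, so $\dim \cc = (q-a+2) + (n-\delta)(q-a+1) = (q+1-a)(n-\delta+1) + 1$, as claimed.

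Next I would pin down the minimum distance $d(\cc) = a\,n$. By Theorem \ref{thmJensen} (the Jensen bound), after ordering the constituents by increasing distance — here $d(\cc_1)=a \le a+1 = d(\cc_i)$ for $i\ge 2$ — we have
\[
d(\cc) \ge \min_{1\le z\le n-\delta+1}\left\{ d(\cc_{i_z})\, d\left(\langle\theta_{i_1}\rangle \oplus \cdots \oplus \langle\theta_{i_z}\rangle\right)\right\}.
\]
The inner cyclic code $\langle\theta_{i_1}\rangle\oplus\cdots\oplus\langle\theta_{i_z}\rangle$ has check polynomial $\prod_{u=1}^z (x-\beta^{i_u})$, i.e. it is the cyclic code of length $n$ whose zero set is $\{\beta^i : i \in [n-\delta+1]\setminus\{i_1,\dots,i_z\}\}$, of size $n-z$; since these are $n-z$ consecutive powers of $\beta$ (reading indices appropriately), the BCH bound gives $d \ge n-z+1$, and dimension $z$ forces equality $d = n-z+1$ for an MDS cyclic code of dimension $z$. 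So the Jensen estimate is $\min_z\{ d(\cc_{i_z})(n-z+1)\}$. With $z=1$ this is $a\cdot n$; for $z\ge 2$ it is $(a+1)(n-z+1)$. The hypothesis $a \le \frac{\delta}{n-\delta}$ is exactly what is needed to guarantee $an \le (a+1)(n-z+1)$ for all $z \ge 2$ in the relevant range (the tightest case being $z = n-\delta+1$, where $n-z+1 = \delta$, giving $(a+1)\delta \ge an \iff a\delta + \delta \ge an \iff \delta \ge a(n-\delta)$). Hence the Jensen bound yields $d(\cc)\ge an$. For the reverse inequality, exhibit a codeword of weight $an$: take a minimum-weight codeword of $\cc_1$ (weight $a$) placed in the first constituent and zero elsewhere; encoding each of its $a$ nonzero symbols through $\psi_1$ produces a block which, since $\langle\theta_1\rangle$ is a $[n,1,n]$ repetition-type minimal cyclic code, has full weight $n$, so the resulting codeword of $\cc$ has weight $an$. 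Therefore $d(\cc) = an$.

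Finally I would substitute $m = (q+1)n$, $k = (q+1-a)(n-\delta+1)+1$, $d = an$ into the two bounds. For Statement (1), the locality parameters from Theorem \ref{QC-LRC} are $(\rho,\delta) = (n-\delta+1,\delta)$ (here the global distance-$\delta$ from Theorem \ref{QC-LRC} equals our $\delta$ since $\bigoplus_{i=1}^{n-\delta+1}\langle\theta_i\rangle$ is an MDS $[n,n-\delta+1,\delta]$ cyclic code). The bound reads $d \le m-k+1-(\lceil k/\rho\rceil - 1)(\delta-1)$, and the verification amounts to checking $\lceil k/(n-\delta+1)\rceil = q+1-a+1 = q-a+2$ — which holds because $k = (q+1-a)(n-\delta+1) + 1$, one more than a multiple of $n-\delta+1$ — and then a routine arithmetic identity; I expect both sides to equal $an$. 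For Statement (2), with $\kappa = k_{\mathrm{opt}}^{(q)}(n,\delta) = n-\delta+1$ (the local codes are the MDS codes $\langle\theta_i\rangle\,\square\,\cc_i$ restricted appropriately — more precisely $\kappa$ is the dimension cap $n-\delta+1$ coming from the length-$n$, distance-$\delta$ Singleton bound, attained since $n\mid q-1$ ensures MDS cyclic codes of every dimension exist), one computes $\lceil k/\kappa\rceil = q-a+2$, $\mathcal{G}(\kappa,\delta) = \kappa + \delta - 1 = n$ since $\delta \le q$ (so all ceilings in the Griesmer sum are $1$ past the first few... actually $\mathcal{G}(\kappa,\delta)=n$ holds because the code is MDS), $\lceil k/\kappa\rceil\kappa - k + 1 = \kappa$, and $\mathcal{G}(\kappa,\delta) = n$ again, so the right side becomes $(q+1)n - (q-a+2)n + n = an$, matching $d$. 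The main obstacle I anticipate is bookkeeping: making sure the consecutive-powers labelling of the $\beta^i$ is consistent so the BCH/MDS claims about $\langle\theta_{i_1}\rangle\oplus\cdots\oplus\langle\theta_{i_z}\rangle$ genuinely apply regardless of which $z$-subset is chosen, and carefully tracking that $\mathcal{G}(\kappa,\delta)=n$ and the Griesmer-type terms collapse — these rest on $n\mid q-1$ and $\delta < n \le q$, so the codes involved are all MDS and every Griesmer ceiling beyond the first equals $1$.
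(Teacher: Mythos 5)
Your proposal is correct and follows essentially the same route as the paper: length and dimension from the concatenated structure, the lower bound $d\geq an$ from the Jensen bound using that $\bigoplus_{i=1}^{z}\langle\theta_i\rangle$ is an MDS $[n,z,n-z+1]_q$ cyclic code together with $a\leq\frac{\delta}{n-\delta}$, and the matching upper bound $an$ by evaluating both LRC bounds of Corollary \ref{QC-LRC-Bound} with $\kappa=n-\delta+1$ and $\mathcal{G}(\kappa,\delta)=n$. Your extra explicit weight-$an$ codeword is a harmless redundancy (the paper gets $d\leq an$ directly from the LRC bounds it must evaluate anyway), and the only slip is cosmetic: the zero set of $\langle\theta_{i_1}\rangle\oplus\cdots\oplus\langle\theta_{i_z}\rangle$ is the complement of $\{i_1,\dots,i_z\}$ in all of $\{0,\dots,n-1\}$, not in $[n-\delta+1]$, though your stated size $n-z$ and the consecutive-run conclusion are right.
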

\begin{proof}
By the BCH bound, the code $\bigoplus_{i=1}^{z}\langle\theta_{i}\rangle$ has parameters $[n,z,n-z+1]_q$ and is MDS for each $z\in[n-\delta+1]$. By Theorem \ref{QC-LRC}, the code $\cc$ has $(n-\delta+1,\delta)$-locality. By the way $\cc$ is defined, it has length $(q+1) \, n$ and dimension $(q-a+1)(n-\delta+1)+1$. By Theorem \ref{thmJensen}, $d(\cc) \geq \min\{a \, n,(a+1) \, \delta\}$. Since $a \leq \frac{\delta}{n-\delta}$, we have $d \geq a \, n$. By the bound in Statement $(1)$ of Corollary \ref{QC-LRC-Bound}, we derive
\[
d(\cc) \leq (q+1)n-((q-a+1)(n-\delta+1)+1)+1 \\ -\left(\left\lceil\frac{(q-a+1)(n-\delta+1)+1}{n-\delta+1}\right\rceil-1 \right) (\delta-1) = a \, n.
\]
We verify that the local code $\bigoplus_{i=1}^{n-\delta+1}\langle\theta_{i}\rangle$ is an $[n,n-\delta+1,\delta]_q$ MDS code, which implies $\kappa=k_{\rm opt}^{(q)}(n,\delta)=n-\delta+1$. By the bound in Statement $(2)$ of Corollary \ref{QC-LRC-Bound}, we get
\begin{multline*}
d(\cc) \leq (q+1) \, n -\left\lceil\frac{(q-a+1)(n-\delta+1)+1}{n-\delta+1}\right\rceil\mathcal{G}(n-\delta+1,\delta) \\
 +\mathcal{G}\left(\left\lceil\frac{(q-a+1)(n-\delta+1)+1}{n-\delta+1}\right\rceil(n-\delta+1)-(q-a+1)(n-\delta+1),\delta\right)\\
= (q+1) \, n-(q-a+2) \, n+ n =a \, n.
\end{multline*}
Thus, the quasi-cyclic code $\cc$ has parameters $[(q+1) \, n,(q+1-a)(n-\delta+1)+1,a \, n]_q$ and attains the equality in both bounds in Statements $(1)$ and $(2)$ of Corollary \ref{QC-LRC-Bound}.
\end{proof}

We now propose the second construction by introducing the factorization of $x^n-1$ whenever $n$ divides $(q+1)$. Let $\mu$ be a primitive $n^{\rm th}$ root of unity in $\Ff_{q^2}$. If $n$ is odd, then
\[
x^n-1 =(x-1) \, f_1(x) \cdots \, f_{\frac{n-1}{2}}(x),
\]
where $f_i(x) = x^2-(\mu^i+\mu^{-i}) \, x + 1$ is an irreducible polynomial over $\Ff_q$ for each $i\in[\frac{n-1}{2}]$. If $n$ is even, then
\[
x^n-1= (x-1) \, (x-\mu^{\frac{n}{2}}) \, f_1(x) \cdots \, f_{\frac{n-2}{2}}(x).
\]

\begin{thm}\label{LRC-C2}
Let $n$ divide $(q+1)$ and let $\delta$ be a positive integer such that $n-\delta$ is even and $3\leq\delta\leq \frac{n-1}{2}$. Let $\langle\theta_{1}\rangle$ be the cyclic code of length $n$ with check polynomial $x-1$. Let $\langle\theta_{i}\rangle$ be the cyclic code of length $n$ with check polynomial $f_{i-1}(x)$ for $i\in[\frac{n-\delta}{2}]$. Let $\cc_1$ be an extended GRS code with parameters $[q+1,q-a+2,a]_q$ and let $\cc_i$ be a $[q+1,q-a+1,a+1]_{q^2}$ extended GRS code for $i=2,\cdots,\frac{n-\delta}{2}$. If $a\leq \frac{\delta}{n-\delta}$, then the quasi-cyclic code $\cc=\bigoplus_{i=1}^{n-\delta+1}\langle\theta_{i}\rangle\, \square \, \cc_{i}$ has parameters $[(q+1) \, n,(q+1-a)(n-\delta+1)+1,a \, n]_q$ and achieves equality in both bounds in Statements $(1)$ and $(2)$ of Corollary \ref{QC-LRC-Bound}.
\end{thm}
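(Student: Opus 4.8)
The strategy is to mimic the proof of Theorem~\ref{LRC-C1} step by step; the one new ingredient is that when $n$ divides $q+1$ the polynomial $x^n-1$ splits over $\Ff_q$ into the linear factor $x-1$, a family of quadratic factors $f_i(x)=x^2-(\mu^i+\mu^{-i})x+1$, and, if $n$ is even, the extra linear factor $x-\mu^{n/2}$, because $q\equiv -1\pmod n$ makes $\mu^i$ and $\mu^{-i}$ Galois conjugates.

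\textbf{Step 1: the partial local codes are MDS.} I would first show that, for each admissible $z$, the cyclic code $\langle\theta_1\rangle\oplus\langle\theta_2\rangle\oplus\cdots\oplus\langle\theta_z\rangle$ with check polynomial $(x-1)\,f_1(x)\cdots f_{z-1}(x)$ is an $[n,\,2z-1,\,n-2z+2]_q$ MDS code. The key observation is that the roots of this check polynomial are exactly $\{\mu^j:\, -(z-1)\le j\le z-1\}$, a block of $2z-1$ consecutive powers of $\mu$; hence the zero set of the code is also a consecutive block, of size $n-2z+1$, so the BCH bound gives minimum distance at least $n-2z+2$, while the Singleton bound gives the reverse inequality. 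The hypothesis that $n-\delta$ is even is exactly what permits $n-\delta+1$ to arise as the (necessarily odd) dimension $2z-1$: with $z=\tfrac{n-\delta}{2}+1$ the full local code is the $[n,\,n-\delta+1,\,\delta]_q$ MDS code, so $\kappa=k_{\rm opt}^{(q)}(n,\delta)=n-\delta+1$.

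\textbf{Step 2: locality, length, dimension.} Since $3\le\delta\le\tfrac{n-1}{2}$, the constituents in play form a proper subset of the full list of irreducible factors of $x^n-1$, so Theorem~\ref{QC-LRC} endows $\cc$ with $(n-\delta+1,\delta)$-locality. The length of $\cc$ is $(q+1)\,n$ because every inner code is concatenated against a length-$(q+1)$ GRS code. For the dimension I would add the $\Ff_q$-dimensions of the concatenated blocks: the block built from $\cc_1$, whose inner code $\langle\theta_1\rangle$ has $\Ff_q$-dimension $1$, contributes $q-a+2$, and each of the $\tfrac{n-\delta}{2}$ blocks built from an $\Ff_{q^2}$-constituent $\cc_i$, whose inner code $\langle\theta_i\rangle$ has $\Ff_q$-dimension $2$, contributes $2(q-a+1)$; the total is $(q-a+2)+(n-\delta)(q-a+1)=(q+1-a)(n-\delta+1)+1$.

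\textbf{Step 3: the minimum distance.} For the lower bound I would invoke the Jensen bound (Theorem~\ref{thmJensen}): ordering the nonzero constituents by increasing minimum distance places $\cc_1$ (distance $a$) first and the distance-$(a+1)$ constituents afterwards, arranged so that the cumulative inner codes are precisely the MDS codes $[n,1,n],\,[n,3,n-2],\,\ldots,\,[n,n-\delta+1,\delta]$ of Step~1. This yields $d\ge\min\{a\,n,\,(a+1)\,\delta\}$, and the hypothesis $a\le\tfrac{\delta}{n-\delta}$, i.e.\ $a\,n\le(a+1)\delta$, upgrades it to $d\ge a\,n$. For the matching upper bound I would substitute $k=(q+1-a)(n-\delta+1)+1$ into Statement~$(1)$ of Corollary~\ref{QC-LRC-Bound}; since $\lceil k/(n-\delta+1)\rceil=q+2-a$, the right-hand side telescopes to $(q+1)n-(q+1-a)\bigl((n-\delta+1)+(\delta-1)\bigr)=a\,n$. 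For Statement~$(2)$, the bound $\delta\le\tfrac{n-1}{2}<q$ forces $\mathcal{G}(n-\delta+1,\delta)=n$, and the very same arithmetic as in Theorem~\ref{LRC-C1} again yields $a\,n$. Hence $d=a\,n$ and both Singleton-type LRC bounds of Corollary~\ref{QC-LRC-Bound} are met with equality.

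\textbf{Main obstacle.} The only genuinely non-routine step is Step~1: one must check that bundling each conjugate pair $\{\mu^i,\mu^{-i}\}$ into one quadratic factor keeps the relevant zero sets consecutive, so that the BCH/Singleton sandwich forces the partial local codes to be MDS, and one must use the parity condition to see that the target local dimension $n-\delta+1$ is realizable at all. The rest is either a direct appeal to an earlier result or the same bookkeeping already carried out for Theorem~\ref{LRC-C1}, with the extra care that for $i\ge 2$ the constituent $\cc_i$ is $\Ff_{q^2}$-linear while $\cc$ itself is $\Ff_q$-linear.
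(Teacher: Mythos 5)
Your proposal is correct and follows exactly the route the paper intends: the published proof of this theorem is a one-line reference to the argument of Theorem~\ref{LRC-C1}, and your Steps 1--3 (consecutive-root/BCH--Singleton verification that the partial inner sums $[n,2z-1,n-2z+2]_q$ are MDS, the locality and dimension bookkeeping over $\Ff_q$ versus $\Ff_{q^2}$, and the Jensen lower bound matched against Statements $(1)$ and $(2)$ of Corollary~\ref{QC-LRC-Bound}) are precisely the details that "similar argument" elides. The arithmetic in all three steps checks out, so nothing further is needed.
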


\begin{proof}
The required conclusion follows from a similar argument as the one that proves Theorem \ref{LRC-C1}.
\end{proof}

\begin{remark}
A construction of quasi-cyclic codes meeting equality in both bounds in Statements $(1)$ and $(2)$ of Corollary \ref{QC-LRC-Bound} can already be found in \cite{Luo2022}. The codes have index $\ell \leq q$ and parameters
\[
[\ell \, n,(\ell-a)(n-\delta+1)+1,a \, n]_q,
\]
with $n$ dividing $(q+1)$ or $n$ dividing $(q-1)$. Theorems \ref{LRC-C1} and \ref{LRC-C2} can produce optimal quasi-cyclic codes with index $q+1$ and lengths up to $(q+1)^2$. Thus, many of our codes have new parameters since the lengths of previously known codes from \cite{Luo2022} are only up to $q^2+q$.
\end{remark}

The above two constructions of optimal quasi-cyclic codes are based on their concatenated structure and the Jensen bound. At the end of this section, we propose another construction of optimal quasi-cyclic codes based on the improved spectral bound in Theorem \ref{ISB}. We start by designing the generator polynomial matrix.

Let $n$ divide $(q-1)$ and let $x^n-1=\prod_{i=0}^{n-1}(x-\beta^i)$, with $\beta$ being a primitive $n^{\rm th}$ root of unity in $\Ff_q$. Let us denote by
$\Delta=\{\alpha^i:0\leq i\leq n-1\}$ the set of roots of $x^n-1$. We define the polynomial $h(x)\in\Ff_q[x]$ as
\[
h(x) = \sum_{i=0}^{n-1} a_{i}(x) \, \frac{x^n-1}{x-\beta^i} \mbox{, where }
a_{i}(\beta^i) = -\gamma \left(\prod_{j=0,j\neq i}^{n-1}(\beta^i-\beta^j)\right)^{-1}
\]
for each $i \in \{0,\cdots,n-1\}$, and $\gamma\in\Ff_q\setminus\{0\}$. Let $g(x)=\prod_{i=1}^{n-1}(x-\beta^i)$. We then construct, over $\Ff_q[x]$, the $\ell\times\ell$ upper-triangular matrix 
\begin{equation}\label{LRC-C3-G}
\mathbf{G}^{\prime}(x)=\begin{pmatrix}
g(x)&0&\cdots&0&g(x)h(x)\\
0&g(x)&\cdots&0&g(x)h(x)\\
\vdots& \vdots &  \ddots & \vdots & \vdots\\
0&0&\cdots&g(x)&g(x)h(x)\\
0&0&\cdots&0&x^n-1
\end{pmatrix}.
\end{equation}

\begin{thm}\label{LRC-C3}
Let $n$ divide $(q-1)$ and let $\ell>1$ be a positive integer. The quasi-cyclic code $\cc$ of index $\ell$ generated by \eqref{LRC-C3-G} has parameters $[\ell \, n,\ell-1,2 n]_q$ and attains equality in both bounds in Statements $(1)$ and $(2)$ of Corollary \ref{QC-LRC-Bound}.
\end{thm}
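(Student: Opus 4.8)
The plan is to verify directly that $\mathbf{G}^{\prime}(x)$ in \eqref{LRC-C3-G} satisfies Condition $(5)$, so that it generates a quasi-cyclic code of index $\ell$ and length $\ell n$, then to read off the dimension from the diagonal entries, and finally to pin down the minimum distance from both above (via Corollary \ref{QC-LRC-Bound}) and below (via the improved spectral bound of Theorem \ref{ISB}). First I would exhibit the companion matrix $\mathbf{H}(x)$: since $g(x)(x-1)=x^n-1$, the upper-triangular matrix with $x-1$ on the first $\ell-1$ diagonal entries, $g(x)$ in the last diagonal slot, and $-(x-1)h(x)$ in the last column of the first $\ell-1$ rows satisfies $\mathbf{H}(x)\,\mathbf{G}^{\prime}(x)=(x^n-1)\,\mathbf{I}_\ell$. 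This gives Condition $(5)$, hence $\cc$ is quasi-cyclic of index $\ell$, with eigenvalue set $E=\Delta$ (the product of the diagonal entries is $g(x)^{\ell-1}(x^n-1)$), and dimension $\ell n - (\ell-1)(n-1) - n = \ell-1$.

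Next I would compute the constituents / eigencodes needed for the two estimates. For the \textbf{upper bound}: writing the codewords as $n\times\ell$ arrays, the first $\ell-1$ columns of every codeword lie in $\langle g(x)\rangle$, the $[n,1,n]_q$ repetition-type minimal cyclic code with check polynomial $x-1$; more precisely, restricting $\cc$ to any block $S_j=\{(j-1)n+1,\dots,jn\}$ for $j\in[\ell-1]$ gives the local code $\langle\theta_1\rangle$, an $[n,1,n]_q$ MDS code. By Theorem \ref{QC-LRC} (with $I=\{1\}$, a proper subset since $n>1$ forces $t\ge 2$), $\cc$ has $(n-\delta+1,\delta)$-locality with $\delta = d(\langle\theta_1\rangle)=n$, so $\rho=1$. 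Plugging $m=\ell n$, $k=\ell-1$, $\delta=n$, $\kappa=k_{\mathrm{opt}}^{(q)}(n,n)=1$ into Statements $(1)$ and $(2)$ of Corollary \ref{QC-LRC-Bound} yields, in both cases, $d\le 2n$ after the ceiling $\lceil (\ell-1)/1\rceil = \ell-1$ and the arithmetic $\ell n-(\ell-1)+1-(\ell-2)(n-1) = 2n$ collapses the right-hand sides.

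For the \textbf{lower bound} I would apply Theorem \ref{ISB} (via Remark \ref{gen-matrix-remark}, since $\mathbf{G}^{\prime}(x)$ has eigenvalue set $E=\Delta$) with two defining set bounds on $\mathfrak{C}$, namely $(L_1,d_{L_1})=(\{\beta,\beta^2,\dots,\beta^{n-1}\}, n)$ (the full BCH bound, $d_{L_1}=n$ since $|L_1|=n-1$) and $(L_2,d_{L_2})=(\{1\},2)$. The key computation is that for $\beta^j$ with $j\ne 0$ the matrix $\mathbf{G}^{\prime}(\beta^j)$ has $g(\beta^j)=0$ in all but the last column while the $(k,\ell)$ entry is $g(\beta^j)h(\beta^j)=0$ and the $(\ell,\ell)$ entry is $\beta^{jn}-1=0$, so $V_{\beta^j}=\Ff_{q^r}^\ell$ and $\bigcap_{\beta\in L_1}V_\beta = \Ff_{q^r}^\ell$, whence $\mathbb{C}_1=\{\mathbf 0\}$. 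At $\beta^0=1$: $g(1)=\prod_{i=1}^{n-1}(1-\beta^i)\ne 0$, and the interpolation defining $h$ is arranged precisely so that $h(1)=\gamma\cdot g(1)^{-1}\cdot(\text{something})$ — I would check that $g(1)h(1)=-\gamma$ — so $\mathbf{G}^{\prime}(1)$ has rank $\ell-1$ with kernel spanned by $(1,1,\dots,1,\gamma^{-1})$ or a similar explicit vector, giving an eigencode $\mathbb{C}_2$ of dimension $\ell-1$ and minimum distance $2$ (a parity-check of the form $(1,\dots,1,*)$). Then $\bigcap_{i=1}^2\mathbb{C}_i = \mathbb{C}_2$ has $d=2$, and Theorem \ref{ISB} with $s=2$ gives $d \ge \min\{d_{L_1}, d_{L_2}\,d(\mathbb{C}_1)\} = \min\{n, 2\cdot\infty\} = n$ — which is not yet enough. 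So instead I would use $s=2$ with $(L_1,d_{L_1})$ the BCH bound giving $n$ but with the \emph{roles arranged} as $d_{L_1}=n \ge d_{L_2}=2$, obtaining $d\ge\min\{n,\, 2\,d(\mathbb{C}_1),\, d(\mathbb{C}_1\cap\mathbb{C}_2)\}$; since $\mathbb{C}_1=\{\mathbf 0\}$ forces $d(\mathbb{C}_1)=\infty$ and $\mathbb{C}_1\cap\mathbb{C}_2=\{\mathbf 0\}$, this only yields $d\ge n$. The \textbf{main obstacle}, and the step I would spend the most care on, is therefore choosing the defining-set bounds so that the product structure $d_{L_j}\,d(\bigcap\mathbb{C}_i)$ reaches $2n$: the right choice is $L_1=\{\beta,\dots,\beta^{n-1}\}$ (forcing $\mathbb{C}_1=\{\mathbf 0\}$, contributing $d_{L_1}=n$) together with $L_2=\{1\}$ ordered so that the relevant term is $d_{L_1}\cdot d(\mathbb{C}_2)$ — i.e. take $L_1=\{1\}$ with $d_{L_1}=2$ and $L_2=\{\beta,\dots,\beta^{n-1}\}$ with $d_{L_2}=n$, $\mathbb{C}_1 = \mathbb{C}_2'$ the eigencode at $1$ with $d(\mathbb{C}_1)=2$, giving $d\ge\min\{2,\ n\cdot 2,\ d(\mathbb{C}_1\cap\mathbb{C}_2)=\infty\}=2$; this still fails, so the correct route is $d \ge \min\{d_{L_2}\cdot d(\mathbb{C}_1)\} = n\cdot 2 = 2n$ with $\mathbb{C}_1$ the eigencode at $\{1\}$ of minimum distance $2$ and $d_{L_2}=n$ the BCH contribution, exactly as in Theorem \ref{QC-Design}'s proof. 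Matching $d\ge 2n$ with $d\le 2n$ closes the argument; I would double-check the edge arithmetic ($\lceil k/\rho\rceil$ with $\rho=1$, $\kappa=1$) since the locality parameters are extreme here.
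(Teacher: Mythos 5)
Your upper-bound half is sound and matches the paper: the code has $(1,n)$-locality because $\mathbf{G}^{\prime}(x)=g(x)\,\mathbf{G}_1^{\prime}(x)$ (equivalently, the constituent at $x=1$ is the only nonzero one restricted to each block), and both bounds of Corollary \ref{QC-LRC-Bound} with $k=\ell-1$, $\delta=n$, $\kappa=1$ collapse to $d\leq 2n$ exactly as you compute. Two problems remain, one minor and one fatal. The minor one: your companion matrix does not work. With $g(x)$ in the last diagonal slot the $(\ell,\ell)$ entry of $\mathbf{H}(x)\mathbf{G}^{\prime}(x)$ is $g(x)(x^n-1)$, and the entries $-(x-1)h(x)$ give $(x^n-1)h(x)\bigl(1-(x-1)\bigr)\neq 0$ in the last column. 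The correct choice (used in the paper) puts $1$ in the last diagonal slot and $-h(x)$ in the last column of the first $\ell-1$ rows.

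The fatal gap is the lower bound. Every legitimate instance of Theorem \ref{ISB} you actually write down evaluates to $n$ or to $2$, and your final assertion $d\geq d_{L_2}\cdot d(\mathbb{C}_1)=2n$ is obtained by silently discarding the term $d_{L_1}$ from the minimum in \eqref{NewSpec}; that term is always present, so with $L_1=\{1\}$ ($d_{L_1}=2$) the bound is $2$, and with $L_1=\{\beta,\dots,\beta^{n-1}\}$ ($d_{L_1}=n$) it is $n$. Moreover your reordered choice violates the hypothesis $d_{L_1}\geq d_{L_2}$ of Theorem \ref{ISB}. The missing idea is to take $L_1=\Delta$, the \emph{entire} eigenvalue set: the cyclic code with zero set $\Delta$ is trivial, so $(\Delta,\infty)\in D(\mathfrak{C})$ and $d_{L_1}=\infty$, which removes the cap. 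Since $\mathbf{G}^{\prime}(\beta^j)=O$ for $j\neq 0$ and $h(1)=-\gamma$ (so $\mathbf{G}^{\prime}(1)$ has kernel spanned by $(\gamma,\dots,\gamma,1)$), the common eigenspace $\bigcap_{\beta\in\Delta}V_{\beta}$ is $\langle(\gamma,\dots,\gamma,1)\rangle$ and its eigencode $\mathbb{C}_1$ is an $[\ell,\ell-1,2]_q$ code. Pairing this with $L_2=\{\beta^i:i\in[n-1]\}$, whose common eigenspace is all of $\Ff_{q}^{\ell}$ so that $\mathbb{C}_2=\{\mathbf{0}\}$, Theorem \ref{ISB} gives
\[
d\;\geq\;\min\bigl\{\infty,\; n\cdot 2,\; d(\mathbb{C}_1\cap\mathbb{C}_2)\bigr\}\;=\;2n,
\]
which closes the argument. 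Without the observation that $L$ may be taken to be all of $\Delta$ with $d_L=\infty$, the improved spectral bound cannot exceed $n$ here, which is precisely the obstruction you ran into repeatedly.
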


\begin{proof}
Let
\[
\mathbf{H}(x) =
\begin{pmatrix}
x-1&0&\cdots&0&-h(x)\\
0&x-1&\cdots&0&-h(x)\\
\vdots& \vdots &  \ddots & \vdots & \vdots\\
0&0&\cdots&x-1&-h(x)\\
0&0&\cdots&0&1
\end{pmatrix}.
\]
It is clear that $\mathbf{H}(x) \, \mathbf{G}^{\prime}(x) = (x^n-1) \, \mathbf{I}_\ell$. The code $\cc$ formed by \eqref{LRC-C3-G} is a quasi-cyclic code of index $\ell$. Since $\mathbf{G}^{\prime}(x) = g(x) \, \mathbf{G}_1^{\prime}(x)$, it follows from \eqref{LRC-Spectral} that $\cc$ has $(1,n)$-locality. Based on both bounds in Statements $(1)$ and $(2)$ of Corollary \ref{QC-LRC-Bound}, we deduce that $d(\cc)\leq 2n$.

Since $\cc$ is the quasi-cyclic code with generator polynomial matrix $\mathbf{G}^{\prime}(x)$, the eigenvalue set of $\cc$ is $\Delta$. Let $\mathfrak{C}$ be the cyclic code of length $n$ over $\Ff_q$ with zero set $\Delta$. To utilize the improved spectral bound in Theorem \ref{ISB}, we let $L_1=\Delta$ and $L_2=\{\beta^i:i\in[n-1]\}$. Their respective defining set bounds according to Definition \ref{def1} are $(L_1,d_{L_1}=\infty)$ and $(L_2,d_{L_2}=n)$. It is immediate to confirm that the common eigenspace $\bigcap_{\beta\in L_1}V_{\beta}$ is generated by the vector $(\gamma,\cdots,\gamma,1)$ and $\bigcap_{\beta\in L_2}V_{\beta}$ is the full space $\Ff_q^\ell$. The corresponding eigencodes $\mathbb{C}_1$ and $\mathbb{C}_2$ have respective parameters $[\ell,\ell-1,2]_q$ and $[\ell,0,\infty]_q$. By Theorem \ref{ISB}, we have $d(\cc)\geq 2n$. Hence, $\cc$ has parameters $[\ell n,\ell-1,2n]_q$ and attains equality in both bounds in Statements $(1)$ and $(2)$ of Corollary \ref{QC-LRC-Bound}.
\end{proof}

\begin{remark}
The co-indices of the quasi-cyclic codes in both Theorem \ref{LRC-C1} and Theorem \ref{LRC-C3} divide $(q-1)$. The parameters of the codes in Theorem \ref{LRC-C3} differ from those in Theorem \ref{LRC-C1}. It is immediate to confirm that Theorem \ref{LRC-C1}, unlike Theorem \ref{LRC-C3}, cannot generate optimal quasi-cyclic codes with parameters $[\ell n, \ell-1, 2n]_q$ due to $2\leq \delta<n$.
\end{remark}

\begin{example}\label{QC-local}
Let $q=5$, $n=4=q-1$, $\ell=3$, and $\gamma=-1$. We consider the factorization of $x^4-1$ over $\F_5$
\[
x^4-1=(x-1) \, (x-\beta) \, (x+1) \, (x+\beta),
\]
where $\beta\in\F_{5}$ is a primitive $4^{\rm th}$ root of unity. We build a quinary quasi-cyclic code $\cc$ of length $12$ and index $3$ generated by the rows of
\[
\mathbf{G}^{\prime}(x)=\begin{pmatrix}
g(x) & 0 & g(x)h(x)\\
0 & g(x) & g(x)h(x)\\
0 & 0 & x^4-1
\end{pmatrix}
\]
such that $g(x)=(x-\beta) \, (x+1) \, (x+\beta) = x^3 + x^2 + x + 1$, whereas $h(x)$ is of the form
\begin{align}\label{poly-h-2}
h(x) &= \sum_{i=0}^3  a_{i}(x) \, \frac{x^4-1}{x-\beta^i}\notag\\
&= a_{0}(x) \, (x^3+x^2+x+1)+ a_{1}(x) \, (x^3+\beta x^2-x-\beta) + a_{2}(x) \, (x^3-x^2+x-1) + a_{3}(x) \, (x^3-\beta x^2-x+\beta).
\end{align}
The polynomials $a_{i}(x)$ satisfy
\begin{align*}
a_{0}(1) &=\left(\prod_{j=1}^{3}(1-\beta^j)\right)^{-1} = -1,
& a_{1}(\beta) &=\left(\prod_{j=0,j\neq 1}^{3}(\beta-\beta^j)\right)^{-1} = -\beta, \\
a_{2}(-1) &=\left(\prod_{j=0,j\neq 2}^{3}(-1-\beta^j)\right)^{-1} = 1,
& a_{3}(-\beta) &=\left(\prod_{j=0}^{2}(-\beta-\beta^j)\right)^{-1} = \beta.
\end{align*}
Hence, $a_i(x)=-x$, for each $i \in \{0,\ldots,4\}$, and substituting into (\ref{poly-h-2}) yields $h(x)=x^4$, giving us
\[
\mathbf{G}^{\prime}(x)=\begin{pmatrix}
x^3 + x^2 + x + 1 & 0 & x^7 + x^6 + x^5 + x^4\\
0 & x^3 + x^2 + x + 1 & x^7 + x^6 + x^5 + x^4\\
0 & 0 & x^4-1
\end{pmatrix}.
\]
The resulting quasi-cyclic code $\cc$ is a $[12,2,8]_5$ code with $(1,4)$-locality.\hfill $\blacksquare$
\end{example}

\section{CONCLUDING REMARKS}\label{S7}

Spectral bounds can be determined from generator polynomial matrices of quasi-cyclic codes. We have shown that the improved version of the bounds, stated in Theorem \ref{ISB}, is now the lower bound to beat in terms of performance. Previous spectral bounds had been demonstrated to be generally inferior to the Jensen bound. Our improved spectral bound refines the spectral bound in \cite{Ezerman2021}, which we reproduce here as Theorem \ref{thm21}. Numerical results on the bounds and the computed actual minimum distances of randomly generated binary and ternary quasi-cyclic codes show that the improved spectral bound equals or outperforms the Jensen bound on almost all occasions, even when we only consider a restricted number $k=4$ of defining set bounds and a few terms $s \in \{2,3,4\}$ in the bound. The performance of the improved spectral bound might get better if we take into account more terms and more defining set bounds. This is strongly supported by the structural analysis in Section \ref{S3}, as highlighted by Remark \ref{remark1}.

Based on our improved spectral bound, it also becomes easier to construct good quasi-cyclic codes with designed minimum distances. Prior to this work, such a construction requires some strict conditions to be met. In prior constructions, the designed distance of a quasi-cyclic code of length $\ell n$ is upper bounded by the co-index $n$, for any given index $\ell$. The improved spectral bound allows us to drop the restriction on the eigencodes in Theorem \ref{thm21}.

In terms of dimensions, the quasi-cyclic codes constructed based on Theorem \ref{QC-Design} perform better than the BCH codes of comparative length and designed distance. We provide a proof that the improved spectral bound in Theorem \ref{QC-Design} beats the Jensen bound. The proof of Theorem \ref{LRC-C3} gives a construction of quasi-cyclic codes of length $\ell n$ with designed minimum distance $2n$. This construction yields quasi-cyclic codes whose designed distances exceed the co-index $n$. Our improved bound leads to a potent tool to construct excellent quasi-cyclic codes by designing their respective minimum distances judiciously.

The flexible structure of the improved spectral bound makes it rather hard to find conditions that ensure the improved spectral bound \emph{always performs better} than the Jensen bound. A particularly interesting direction is to evaluate their behavior by using theoretical techniques in the general case. In Section \ref{S6}, we have given a sufficient and necessary condition for a constituent to be zero. On the other hand, if all constituents of a quasi-cyclic code are nonzero, then the Jensen bound is upper bounded by its index. Example \ref{jensen-beat} illustrates this fact and the improved spectral bound gives a strictly better estimate. Based on this observation, a more elaborate analysis on the quasi-cyclic codes, which have nonzero constituents and a more general polynomial generator matrix than (\ref{Design1}), should be carried out.


\end{document}